\documentclass[12pt,a4paper]{article}
\usepackage[top=1in, bottom=1.25in, left=1.2in, right=1.2in]{geometry}
\usepackage[ruled]{algorithm2e}
\usepackage{graphicx}
\usepackage{subcaption}
\usepackage{amsthm}
\usepackage{amsmath}
\usepackage{amsfonts}
\usepackage{amssymb}

\newtheorem{theorem}{Theorem}

\title{A Graph-Based Platform for Customer Behavior Analysis using Applications' Clickstream Data}

\author{Mojgan Mohajer\\
	BMW AG\\
	\texttt{mojgan.mohajer@bmw.de}}

\begin{document}

\maketitle

\centerline{Technical Report}

\begin{abstract}
Clickstream analysis is getting more attention since the increase of usage in e-commerce and applications. Beside customers' purchase behavior analysis, there is also attempt to analyze the customer behavior in relation to the quality of web or application design. In general, clickstream data can be considered as a sequence of log events collected at different levels of web/app usage. The analysis of clickstream data can be performed directly as sequence analysis or by extracting features from sequences. In this work, we show how representing and saving the sequences with their underlying graph structures can induce a platform for customer behavior analysis. Our main idea is that clickstream data containing sequences of actions of an application, are walks of the corresponding finite state automaton (FSA) of that application. Our hypothesis is that the customers of an application normally do not use all possible walks through that FSA and the number of actual walks is much smaller than total number of possible walks through the FSA. Sequences of such a walk normally consist of a finite number of cycles on FSA graphs. Identifying and matching these cycles in the classical sequence analysis is not straight forward. We show that representing the sequences through their underlying graph structures not only groups the sequences automatically but also provides a compressed data representation of the original sequences.
\end{abstract}

{\bf Keywords:} {sequence analysis, customer behavior analysis, clickstream, graph based, cycle, simple path, graph database}

\section{Introduction}
Collecting and analyzing clickstream data is growing since the increase of usage in e-commerce and applications. The focus is mainly of on purchasing and prediction of customers' purchase behavior~\cite{srivastava2000web, banerjee2001clickstream, moe2002bayesian, heer2002separating, montgomery2004modeling, baumann2018changing, wang2017clickstream}. However, there have also been attempts to analyze the customer behavior with respect to the usability of web or application design~\cite{bhatti2000integrating,ting2005ubb, wang2013you, geng2014improving}. In general, clickstream data can be considered as a sequence of log events collected at different levels of web/app usage, such as client or server level. It can be also considered at different levels of granularity, for example, page visits or action clicks on each visited page. The analysis of clickstream data has developed in several directions during the previous years. Some of the works focus on feature extraction from clickstream sequences for further analysis~\cite{bhatti2000integrating,srivastava2000web,moe2002bayesian, heer2002separating}. There are some other works that use clickstream sequences as input data and perform sequence analysis methods directly on sequences~\cite{banerjee2001clickstream,montgomery2004modeling,lu2005mining,ting2005ubb,wang2013you,wang2016unsupervised,baumann2018changing,wang2017clickstream}. Path analysis is also related to this topic, where a specific path is defined and customer behavior is analyzed through this path to identify possible problems and bottle necks~\cite{geng2014improving, mah2006method}. 
For analyzing the customer behavior towards quality analysis of application design, extracting features or doing path analysis can be meaningful. However, such methods suffer from the lack of generality. In both methods, we define explicitly our focus of analysis from the beginning. On the other hand, generalized methods which analyze the whole clickstream sequences and try to find patterns to understand customer behavior, provide insights into the design problems, which are not clear from the beginning. In the works which consider the clickstream data as sequences, either perform classical sequence mining to find out, e.g., which pages were normally visited together~\cite{hay2004mining,pitman2010insights}, or cluster the users by defining some similarity measures between the sequences~\cite{banerjee2001clickstream,lu2005mining,ting2005ubb,wang2013you,wang2017clickstream}. These methods normally suffer from scalability and performance.

Using graphs for sequence analysis in this domain is mostly limited to similarity-graphs and corresponding algorithms on similarity-graphs~\cite{wang2016unsupervised,wang2017clickstream}. In a similarity-graph, each node represents a sequence and edges represent the distance between nodes. In this sense, the graph is not used to take the structure of the sequence into account but is a tool for performing specific clustering algorithms optimized for similarity-graphs. Yet, representing the sequence in a graph structure has some benefits. In biology, for example, they use graphs to represent parts of sequences, due to the fact that graph structure is invariant under some transformations such as mirroring~\cite{corel2016network,paten2017genome,kavya2019sequence, zhang2018network}. In the domain of clickstream analysis, Baumann et al.~\cite{baumann2018changing} use graph structure for modeling the clickstream sequences. They generate a graph for each sequence of click data and calculate graph metrics for that generated graph. Finally, they use the calculated graph metrics for further methods such as gradient boosted trees. They examine experimentally the importance of different metrics and showed, for example, that number of cycles and self-loops are significant metrics in prediction of customer purchasing behavior. But they loose the track of how many times a cycle was traversed.

A graph structure can capture the circles in a sequence and therefore provides a more compact representation of a sequence by counting the number of times a cycle was visited. It is specially the case if the sequences are the walks through a state automaton, which is the case in most of clickstream data from applications. Furthermore, if a sequence can be considered as a regular expression, then it can be reduced to a walk on a finite state automata as well~\cite{hingston2002using}. 

In this work, we show how representing and saving the sequences with their underlying graph components, can induce a platform for customer behavior analysis. The main idea behind our work is that clickstream data produced from actions in an application are walks of the corresponding finite state automate (FSA) of that application. Each action corresponds to a state of the FSA. Our hypothesis is that the customers of an application normally are not using all the possible walks through that FSA but the number of real walks is much smaller than the number of possible walks through the FSA. The second point is that the walks (sequences) consist of finite numbers of cycles on FSA graphs. Thus, these walks can be reduced to their common graph components. These components capture information which is not completely covered in the common used methods such as \textit{k-grams pattern analysis} in common sequence analysis. In \textit{k-grams} method, normally the number of common $k$-long sub-sequences of two sequences is calculated. The correct choice of $k$ is crucial. The multiple attendance of a cycle is captured somehow by counting the \textit{k-grams} but the meaning of the cycle itself is totally lost. 

We show that representing the sequences through their underlying graph structure not only groups the sequences automatically, but also is a compressed data representation of the original sequences. This compressed representation, allows queries directly on the saved data in graph database. In the original sequence format, for each of these queries, an specific algorithm is required. In the following sections, we describe the data and formal definitions used in this paper. Afterward, we introduce our method how to convert sequences to their graph components. At the end, we discuss the possible designs in a native graph database and introduce specific queries which can be performed better in this concept.

\section{Data}\label{data}
The data used for our analysis are log events of clickstream from application installed on board of a vehicle such as navigation, radio etc. It consists of sessions which can roughly be considered as a drive (it is still possible to have several sessions during a drive, depending on how a new id for a session is triggered). During a drive the actions of the user on the HMI (Human-Machine Interface) are collected by a event-log mechanism. These events correspond to the states of a FSA. We consider the consequent occurrence of the states of an specific application during a session as a sequence of the clickstream for that application and user. The states in a sequence also have a timestamp which defines the order of the states in a sequence. A vehicle (user) can have several drives (sessions) over time. For each session, exists a sequence for every running application. Each sequence will be represented in its corresponding graph components as described in the next section.

\section{Sequence to Graph Structure}\label{comps}
\subsection{Definitions}
We define a graph $G \left\langle V,E\right\rangle$  with a set of vertices $V$ and edges $E$ as a directed graph, such that an edge $ e \in E$ is a tuple $(u,v)$ with starting vertex $u$ and ending vertex $v$. 
\begin{itemize}
	\item  A \textit{(directed) walk} in such a directed graph $G$ is a finite sequence of edges $(e_1, e_2, …, e_{n-1})$ which joins a sequence of vertices $(v_1, v_2, …, v_n)$ such that $e_i = (v_i, v_{i+1})$ for $i = 1, 2, … , (n-1) $ and $(v_1, v_2, …, v_n)$  is the vertex sequence of the directed walk.
	\item A \textit{(directed) trail} is a \textit{(directed) walk} in which all edges are distinct.
	\item A \textit{(directed) simple path} is a \textit{(directed) trail} in which all vertices are distinct.
	\item A \textit{(directed) cycle} is a non-empty directed trail in which the only repeated vertices are the first and last vertices \cite{bender2010lists}.
\end{itemize}   
Consequently, from these definitions implies that a directed walk can consist of several simple paths and cycles.

In our clickstream data from on-board application, each sequence is the succession of state transitions in the underlying FSA of that application. Thus, each state transition can be considered as an edge in a FSA graph and the whole sequence can be considered as a directed walk on that graph. Finally, these directed walks can be reduced to their components, namely simple paths and cycles. We introduce a method to split the sequences into their components.

\subsection{Algorithm}
As we discussed previously in section 2, a sequence $q$ is the consecutive occurrence of the states of the FSA graph of a specific application during a drive. We assume that for each two consecutive states $u$ and $v$ with $v$ appearing after $u$ in sequence $q$, there exists a directed edge $e=(u,v)$ between those states in the FSA graph.

With algorithm \ref{alg1}, we split a sequence into simple paths and cycles. In general, other splits are also possible. The example below shows the result of our algorithm and an alternative split for a walk on an example graph depicted in Figure \ref{fig1}.
\begin{figure}
	\centering
	\includegraphics[height=3cm]{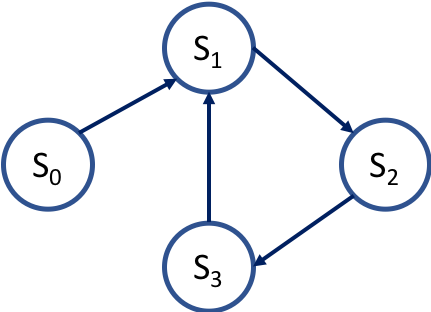}
	\caption{A small directed graph $G_1$.}
	\label{fig1}
\end{figure}

Sequence $q = (S_0,S_1,S_2,S_3,S_1,S_2,S_3,S_1,S_2,S_3,S_1,S_2)$ is a directed walk on the graph $g$ in  Figure \ref{fig1}. The algorithm \ref{alg1} reduces this sequence in following components:
\begin{itemize}
	\item a simple path: $(S_0,S_1)$
	\item three cycles: $(S_1,S_2,S_3,S_1)$
	\item a simple path: $(S_1,S_2)$
\end{itemize}
An alternative split of the same sequence can be:
\begin{itemize}
	\item a simple path: $(S_0,S_1, S_2,S_3)$
	\item two cycles: $(S_3,S_1, S_2, S_3)$
	\item a simple path: $(S_3, S_1,S_2)$
\end{itemize}

\begin{algorithm}[H]
\SetAlgoLined
	\KwData{given a sequence $q$}
	list of visited vertices $vs$ = [ ]\;
	list of paths $ps$ = [ ]\;
	list of cycles $cs$ = [ ]\;
	\For{each state $u$ in $q$}{
		\If{$u$ in $vs$ and $u$ is the first element of $vs$}
			{add $u$ to the end of $vs$\;
			add $vs$ to $cs$\; 
			set $vs$ = $[ u]$\;}
		\eIf{$u$ in $vs$ and $u$ is equal to the $i^{th}$ element of $vs$}{
			add $u$ to the end of $vs$\;
			add $(v_0,...v_{i-1})$ in $vs$ as a simple path to $ps$\;
			add $(v_{i},..., u) $ in $vs$ as a cycle to $cs$\;
			set $vs$ = $[ u]$\;
		}{
			add $u$ to $vs$\;
		}
	}
	\If{$vs$ is not empty}{add $vs$ as a simple path to $ps$}
	\KwResult{return the lists $ps$ and $cs$ as graph components of sequence $q$ }
	\
	\caption{Sequence to Graph Components}
	\label{alg1}
\end{algorithm}

\

If we put the simple paths from first splitting together, we will have a simple path from vertex $S_0$ to the vertex $S_2$, $(S_0,S_1, S_2)$. Doing the same for alternative splitting, we will have $(S_0,S_1, S_2,S_3,S_1,S_2)$ which is no longer a simple path and include a cycle.
We can show that the algorithm \ref{alg1} always produce the maximal number of cycles. For this, we start with some theorems.

\begin{theorem}
	a simple path $p_1$ is a part of another simple path $p_2$ if and only if, the edges in $p_1$ appear in the same order in $p_2$.
	\label{theo1}
\end{theorem}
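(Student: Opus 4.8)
The plan is to prove the two implications separately, in both cases exploiting the defining feature of a simple path recorded in the definitions above: every vertex occurs at most once, so each internal vertex has a uniquely determined incoming and a uniquely determined outgoing edge within the path. First I would fix notation. Write $p_2$ by its vertex sequence $(v_1, v_2, \ldots, v_n)$ with edges $e_i = (v_i, v_{i+1})$ for $1 \le i \le n-1$; since $p_2$ is simple the $v_i$ are pairwise distinct, so every vertex occurring in $p_2$ has a uniquely determined index and every edge $e_i$ occurs exactly once. Similarly write $p_1 = (u_1, \ldots, u_m)$ with edges $f_j = (u_j, u_{j+1})$.

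The implication from $p_1$ being a part of $p_2$ to the edges matching order is immediate: if the vertex sequence of $p_1$ is a contiguous block $(v_k, v_{k+1}, \ldots, v_{k+m-1})$ of that of $p_2$, then $f_j = e_{k+j-1}$, and these edges occur in $p_2$ in exactly the order $e_k, e_{k+1}, \ldots$, which is their order in $p_1$.

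The converse is the substantive direction. Assuming every edge $f_j$ occurs in $p_2$ and in the same relative order, I would show this occurrence is in fact contiguous, which is what makes $p_1$ a part of $p_2$. Since $f_1 = (u_1, u_2)$ appears in $p_2$, there is an index $k$ with $f_1 = e_k$, hence $v_k = u_1$ and $v_{k+1} = u_2$. I would then argue by induction that $f_j = e_{k+j-1}$ for all $j$: the edge $f_{j+1} = (u_{j+1}, u_{j+2})$ has tail $u_{j+1} = v_{k+j}$, and because $v_{k+j}$ occurs only once in $p_2$, the only edge of $p_2$ leaving it is $e_{k+j}$, forcing $f_{j+1} = e_{k+j}$. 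Hence the edges of $p_1$ form the consecutive block $e_k, \ldots, e_{k+m-2}$ and its vertices form the contiguous block $(v_k, \ldots, v_{k+m-1})$, so $p_1$ is a part of $p_2$.

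The step I expect to be the main obstacle is exactly this inductive step, where I must rule out gaps: a priori the edges of $p_1$ could lie inside $p_2$ in the correct order yet fail to be adjacent. The resolution is the distinct-vertices property of a simple path, which pins the shared vertex of two consecutive edges of $p_1$ to a single location in $p_2$ and leaves only one possible continuing edge. I would also stress that this is precisely where the hypothesis that both objects are simple paths, rather than mere trails or walks, is indispensable: a repeated vertex would permit more than one continuing edge, and the order would no longer be forced.
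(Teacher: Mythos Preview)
Your argument is correct and rests on the same key observation as the paper's proof: the distinct-vertices property of a simple path forces consecutive edges of $p_1$ to sit contiguously inside $p_2$. The paper phrases this by contradiction (extra edges between $e_1=(u,v)$ and $e_2=(v,w)$ would make $p_2$ leave $v$ and return to $v$, creating a cycle), whereas you phrase it directly via uniqueness of the outgoing edge at each vertex; your version is more carefully set up and treats both implications explicitly, but the substance is the same.
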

\begin{proof}
	By definition, every edge in a simple path starts with the end vertex of its previous edge. Thus, for two consecutive edges $e_1$ and $e_2$ in $p_1$, we have $e_1=(u,v)$ and $e_2=(v,w)$. If the path $p_2$, has other edges between $e_1$ and $e_2$, then it contains a cycle and it is not any more a simple path.
\end{proof}

\begin{theorem}
	a cycle $c_1$ with $m$ the number of its edges can not be part of an other cycle $c_2$ with length of $n$ where $ n>m$.
	\label{theo2}
\end{theorem}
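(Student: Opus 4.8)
The plan is to argue by contradiction, leaning entirely on the defining property of a cycle recalled in the excerpt: a (directed) cycle is a trail whose \emph{only} repeated vertices are its first and last. First I would fix notation. Write $c_2$ through its vertex sequence $(w_1, w_2, \ldots, w_n, w_1)$ with $w_1, \ldots, w_n$ pairwise distinct, so that $c_2$ has exactly $n$ distinct vertices and $n$ edges, and every vertex is entered by exactly one edge of $c_2$ and left by exactly one edge of $c_2$. By the same reasoning $c_1$ has exactly $m$ distinct vertices and $m$ edges, and it begins and ends at a single common vertex $v$, with every vertex of $c_1$ having in-degree and out-degree $1$ inside $c_1$.

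Next I would make precise what ``part of'' should mean, in analogy with Theorem \ref{theo1}: the edges of $c_1$ occur among the edges of $c_2$ in the same cyclic order. The engine of the proof is then a purely local degree count. Each vertex $u$ of $c_1$ is left by exactly one edge of $c_1$ and entered by exactly one edge of $c_1$; since those two edges are also edges of $c_2$, and since in $c_2$ each vertex is left by exactly one edge and entered by exactly one edge, they must be the unique outgoing and incoming $c_2$-edges at $u$. Consequently, every edge of $c_2$ incident to a vertex of $c_1$ already belongs to $c_1$.

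With that in hand I would finish by a connectivity argument. Assume $n > m$, so that the vertex set $V(c_1)$ is a proper nonempty subset of $V(c_2)$. Travelling once around the cycle $c_2$ one must, at some step, pass from a vertex in $V(c_1)$ to a vertex outside $V(c_1)$; otherwise the walk could never leave $V(c_1)$ and yet reach the remaining $n-m>0$ vertices. That crossing edge of $c_2$ is incident to a vertex of $c_1$ but is not an edge of $c_1$, since its terminal vertex lies outside $V(c_1)$. This contradicts the conclusion of the previous paragraph, forcing $V(c_1)=V(c_2)$ and hence $m=n$. So no embedding with $n>m$ can exist.

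I expect the main obstacle to be pinning down the definition of ``is part of'' precisely enough to license the degree-counting step, rather than any difficulty in the counting itself. If instead one reads ``part of'' as ``a contiguous sub-trail,'' the argument collapses to a single observation: the common start-and-end vertex $v$ of $c_1$ would then be a vertex of $c_2$ repeated at two distinct positions of its vertex sequence, yet the only repeated vertex permitted in the cycle $c_2$ is $w_1=w_{n+1}$, which forces $c_1$ to coincide with $c_2$ and thus $m=n$. I would therefore state the chosen interpretation explicitly at the outset and note that both readings lead to the same conclusion.
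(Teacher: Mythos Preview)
Your argument is correct and follows essentially the same route as the paper: both proofs hinge on the observation that every vertex of a directed cycle has in-degree and out-degree exactly one, so the extra edges that a strictly longer cycle $c_2$ would need cannot attach to any vertex of $c_1$ without violating this degree constraint, yet by connectedness of $c_2$ they must. Your write-up is more careful than the paper's in making the degree-count and the connectivity step explicit, and your added remark about the ``contiguous sub-trail'' reading (where the repeated endpoint of $c_1$ immediately forces $c_1=c_2$) is a nice bonus that the paper does not mention.
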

\begin{proof}
	Every vertex on a directed cycle has per definition exactly one incoming edge and one outgoing edge. If a cycle is part of a longer cycle that means, there are edges and vertices on the longer one which are not part of the shorter cycle. A cycle is a connected component. Therefore, these extra edges in the longer cycle must be also connected to the edges and vertices of the shorter cycle. This contradicts the condition of exactly one in and out edges on each vertex.
\end{proof}

From Theorem 2 follows that, a cycle is definite and clearly defined just through the set of its edges.

\begin{theorem}
	Considering only the simple paths found by Algorithm  \ref{alg1}, these paths build a connected path from start to the end of the original walk (sequence).
	\label{theo3}
\end{theorem}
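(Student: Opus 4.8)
The plan is to argue by a loop invariant on the execution of Algorithm~\ref{alg1}, tracking how the simple paths already placed in $ps$ together with the current vertex list $vs$ assemble the ``backbone'' of the walk. Writing the vertex sequence of the original walk as $(v_1,\dots,v_n)$ and the current list as $vs=[w_0,\dots,w_\ell]$, I would maintain the following invariant, to be checked after each state $u$ of $q$ has been processed: if $P_1,\dots,P_m$ denote the simple paths currently in $ps$, in the order they were appended, then the concatenation of $P_1,\dots,P_m$ followed by $vs$ — where consecutive pieces are glued at their shared endpoint — is a single connected walk that begins at $v_1$ and ends at the most recently processed vertex $u$. In particular, the end vertex of $P_j$ equals the start vertex of $P_{j+1}$ for each $j$, and the end vertex of $P_m$ equals $w_0$, the first entry of $vs$.

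For the base case, after the first state is read the final (``else'') branch sets $vs=[v_1]$ with $ps$ empty, so the backbone is the single vertex $v_1$ and the invariant holds trivially. For the inductive step I would check the three (mutually exclusive) branches separately. The ``else'' branch, reached only for a genuinely new vertex, appends $u$ to $vs$, extending the backbone by the edge $(w_\ell,u)$ — which exists because consecutive states of $q$ are edges of the FSA graph by assumption — while leaving $ps$ and $w_0$ untouched, so the invariant survives with the endpoint moving to $u$. The ``else-if'' branch is the productive case: when $u$ coincides with the $i$-th entry of $vs$ for some $i\ge 1$, the prefix of $vs$ up to and including that first occurrence of $u$ is emitted as a new simple path $P_{m+1}$, whose start equals the old $w_0$ (hence equals $\mathrm{end}(P_m)$ by the invariant) and whose end is $u$, after which $vs$ is reset to $[u]$. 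Thus $P_{m+1}$ attaches to $P_m$, and the reset list $[u]$ begins exactly where $P_{m+1}$ ends, so the backbone stays connected and now terminates at $u$; the portion that is peeled off becomes the emitted cycle, anchored at the single vertex $u$.

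The delicate case — and the one I expect to be the main obstacle — is the ``if'' branch, where $u$ equals $w_0$ and \emph{no} simple path is emitted at all, only a cycle. Here I must argue that discarding the current segment (resetting $vs$ to $[u]$) does not sever the backbone. The point is that, by the invariant, $w_0$ was already equal to $\mathrm{end}(P_m)$, so the entire segment being closed off was attached to the existing backbone at the single vertex $u=w_0$; since the emitted cycle returns to $u$, replacing the segment by $[u]$ leaves the backbone, together with its start vertex and its (new) end vertex $u$, unchanged. Pinning down that this anchoring is precisely what the reset discipline guarantees is the crux, and it is what makes the cycles detachable without disturbing connectivity.

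Finally I would close the argument at termination. After the last state $v_n$ is processed, the invariant says the backbone runs from $v_1$ to $v_n$, and the trailing list $vs$ — non-empty, since every reset leaves at least one vertex — is appended as the last simple path; this turns the backbone into a pure concatenation of $P_1,\dots,P_{m+1}$ with nothing left over, that is, a connected path from the start $v_1$ to the end $v_n$ of the original walk, which is exactly the assertion of Theorem~\ref{theo3}. The only routine checks remaining are that each emitted $P_j$ is genuinely simple — its vertices are distinct because the ``else'' branch only ever appends vertices not already present in $vs$ — and that the gluing never duplicates a vertex, both of which follow directly from the reset discipline.
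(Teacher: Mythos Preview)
Your argument is correct and more rigorous than the paper's, but it takes a genuinely different route. The paper argues pointwise: for any vertex $v$ on the walk that is not the last, either $v$ starts a cycle (in which case the cycle returns to $v$ and the argument recurs), or $v$ starts the next simple path, or $v$ lies in the interior of some component; from this it concludes that the last vertex of each emitted simple path coincides with the first vertex of the next. That is the same gluing condition you isolate, but the paper reaches it by a static case split on positions in the walk rather than by tracking the algorithm's state. Your loop-invariant approach mirrors the actual control flow, and in particular your treatment of the ``if'' branch --- where a pure cycle is peeled off and no simple path is emitted --- makes explicit why the reset to $[u]$ preserves the anchoring to $\mathrm{end}(P_m)$; the paper's proof folds this into a single sentence under its case~1. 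What your approach buys is an induction that tracks the algorithm step by step; what the paper's buys is brevity, at the cost of leaving the link between its case analysis and the algorithm's branches to the reader.

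One small overreach: your final remark that ``the gluing never duplicates a vertex'' is not true in general. On the walk $A,B,C,B,A$ the algorithm emits simple paths $(A,B)$ and $(B,A)$, whose concatenation $(A,B,A)$ repeats $A$. The theorem only asserts that the concatenated simple paths form a \emph{connected} path (a walk) from the first to the last vertex of $q$, not a simple path, and your invariant already delivers that; the extra claim is unnecessary and should be dropped.
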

\begin{proof}
	Considering a vertex $v$ on the given walk $q$, which is not the last vertex, we have only one of the three possibilities:
	\begin{enumerate}
		\item either $v$ is starting point of a cycle,
		\item or $v$ is the starting point of the next simple path,
		\item otherwise, $v$ is in the middle of a path or cycle.
	\end{enumerate}
In case 1, at the end of the cycle, we will end again at vertex $v$. So that $v$ will be again either a case 1 or 2. Considering only paths, that means a vertex at the end of a simple path, which is not the last vertex of the walk $q$, is the starting point of the next simple path. This concludes that putting all simple paths together will result in a connected path from start to the end.
\end{proof}
\begin{theorem}
	The Algorithm  \ref{alg1} identifies the maximum number of cycles on a walk $q$.
	\label{theo4}
\end{theorem}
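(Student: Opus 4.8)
The plan is to view a decomposition of the walk $q=(v_1,\dots,v_n)$ as a partition into \emph{contiguous} segments $s_1,\dots,s_k$ (consecutive segments sharing their common endpoint) whose ordered concatenation is $q$, each $s_j$ being a simple path or a cycle --- exactly the form produced by Algorithm~\ref{alg1} --- and to maximise the number of cycle-segments. In this language Algorithm~\ref{alg1} is the greedy rule that maintains a current segment $vs$ and closes a cycle the instant the incoming vertex repeats one already in $vs$. A first easy observation is that $vs$ is always a simple path (each branch either appends a genuinely new vertex or resets $vs$ to a single vertex), so the algorithm emits a cycle precisely at each such repeat, and by Theorem~\ref{theo3} the emitted paths chain together into one connected path. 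I would prove optimality by induction on $n$, the engine being an exchange lemma asserting that closing the earliest, shortest cycle first is never suboptimal.

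Let $c$ be the first \emph{closure}, i.e. the least index with $v_c\in\{v_1,\dots,v_{c-1}\}$; since $v_1,\dots,v_{c-1}$ are then distinct, $v_c$ equals a unique earlier vertex $v_f$, and the cycle extracted by the algorithm is $G=(v_f,\dots,v_c)$, after which it restarts from position $c$. The inductive step reduces to the following. \textbf{Exchange Lemma.} Every maximum decomposition $D$ can be rewritten, without losing any cycle, into one that uses $G$ as a segment and $(v_1,\dots,v_f)$ as a (possibly split) simple path. Granting this, the suffix $(v_c,\dots,v_n)$ is a strictly shorter walk to which the induction hypothesis applies, and the lemma yields $\mathrm{opt}(q)=1+\mathrm{opt}(v_c,\dots,v_n)$; since the algorithm likewise outputs $G$ and then recurses on exactly this suffix, it attains the optimum.

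To prove the Exchange Lemma I would locate the first cycle $[a,b]$ of $D$ (the segment running from $v_a$ to $v_b=v_a$) and pin down its position relative to $f$ and $c$. Because no cycle can close before the first closure, $b\ge c$. If $b=c$ then $v_a=v_b=v_c=v_f$ forces $a=f$, so $D$ already uses $G$. Otherwise $b>c$, and because the interior vertices of a cycle are distinct --- the content behind Theorem~\ref{theo2}, that a cycle is pinned down by its edges and cannot swallow a smaller one --- the value $v_f=v_c$ cannot occur twice inside $[a,b]$, forcing $f<a\le c<b$; here the alternative $a>c$ is impossible for an \emph{optimal} $D$, since then recutting $[f,c]$ as the cycle $G$ inside the leading path segments would create an extra cycle, contradicting maximality. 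In the remaining case I recut the block $[f,b]$ as the cycle $G=(v_f,\dots,v_c)$ followed by the stretch $(v_c,\dots,v_b)$, refining $D$'s cut points as needed (a sub-path of a simple path is still a simple path, by Theorem~\ref{theo1}). One then checks that $(v_c,\dots,v_b)$ is a genuine simple path --- its vertices are interior vertices of $D$'s cycle together with $v_b=v_a$, all distinct by the closure property and the interior-distinctness above --- so the recut is valid and replaces $D$'s single cycle on $[f,b]$ by exactly one cycle, preserving the total.

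The main obstacle is precisely this recutting step: when $D$'s optimal first cycle is strictly longer than $G$ and straddles the closure point $c$ (the case $f<a\le c<b$), $G$ and $D$'s cycle \emph{overlap} rather than nest, so no naive swap works; the argument must instead re-slice the shared block and verify that the leftover stretch $(v_c,\dots,v_b)$ contains no repeated vertex. Making that clean is where Theorem~\ref{theo2} (interior vertices of a cycle are distinct) and the minimality of the first closure do the real work. The boundary cases --- $b=c$, where $D$ already uses $G$, and a walk with no closure at all, the base case with zero cycles --- are routine.
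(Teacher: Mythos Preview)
Your argument is essentially correct and considerably more careful than the paper's own proof, but it proceeds along a genuinely different route. The paper argues by a direct matching: it assumes an alternative splitting with strictly more cycles, looks at that splitting's first cycle $(v_0,\dots,v_0)$, and claims that Algorithm~\ref{alg1} must extract a cycle $(u,\dots,u)$ that either coincides with it or overlaps it (using Theorem~\ref{theo2} to rule out one cycle sitting inside the other); it then ``induces to the next cycle''. The bookkeeping of what remains after such an overlap is left entirely implicit in the paper. Your approach instead runs a clean induction on walk length driven by an Exchange Lemma --- any optimal decomposition can be recut, without losing a cycle, to begin with the algorithm's greedy first cycle $G=(v_f,\dots,v_c)$ --- which yields the exact recursion $\mathrm{opt}(q)=1+\mathrm{opt}(v_c,\dots,v_n)$ and makes the inductive step mechanical. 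What you gain is a precise optimality equation and a fully explicit recut; what the paper's argument gains is brevity, at the cost of leaving the overlap case only sketched.

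One small slip to patch: in the overlap case you derive $f<a\le c<b$ and then claim $(v_c,\dots,v_b)$ is a simple path, but at the boundary $a=c$ this stretch is $(v_a,\dots,v_b)$, which is $D$'s own cycle, not a simple path. The repair is immediate and uses the very optimality argument you already deploy for $a>c$: if $a=c$, then recutting inserts $G=[f,c]$ in the leading path region while leaving $D$'s cycle $[a,b]=[c,b]$ intact, producing one more cycle than $D$ and contradicting maximality. Hence for an optimal $D$ one has the strict inequality $f<a<c<b$, after which your distinctness check for $(v_c,\dots,v_b)$ goes through as written.
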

\begin{proof}
	Assume there exists a splitting of walk $q$ with $n$ cycles more than the result of algorithm  \ref{alg1}. Without loosing the generality, we consider the n cycles with $n$ separate starting nodes in the same order in which they appear on the walk $q$. So that, we have cycles $(v_0,...v_0)$, $(v_1,...v_1)$, ..., $(v_n,...v_n)$. Suppose there is a simple path before the first cycle $(v_0,...v_0)$. The algorithm  \ref{alg1} has to find cycle $(v_0,...v_0)$ or it finds another cycle earlier. If it finds another cycle $(u,...u)$ earlier, $(u,...u)$ can not be happen before $(v_0,...v_0)$, otherwise we have $n+1$ cycles in the walk $q$. From Theorem 2 it can not be part of $(v_0,...v_0)$. This means that, either it has its starting point on the path and ends on the cycle, $u,...(v_0,...u, ...v_0)$ or it is completely inside the cycle $(v_0,...u,...,u, ...v_0)$. In both cases, the algorithm will find cycle $(u,...u)$, instead of $(v_0,...v_0)$. That means the cycle $(u,... u)$ appears before other $n-1$ cycles. The algorithm either has to find the rest or we can induce our argument to the next cycle.
\end{proof}

It is to mention that splitting the walk differs from splitting the underlying graph into its components. The example in Figure \ref{fig2} illustrates the difference. Let's compare two walks $w_1$ and $w_2$ on the graph $G_2$ depicted in Figure-\ref{sub1}. Both walks start at vertex $S_0$ and enter the cycle at two different vertices $S_1$ and $S_3$. The Figure-\ref{sub2} illustrates the two walks with the edges of each walk marked with different colors. In Figure-\ref{sub3}, we see the splitting of the walks as a result of algorithm \ref{alg1}. Both walks are split into three components. Two simple paths and a cycle. They share the cycle but have different paths. The ending path of the walk $w_2$ is the part of ending path of the walk $w_1$. Figures~\ref{sub4} and~\ref{sub5} shows the components of walks $w_1$ and $w_2$ from~algorithm \ref{alg1}. On the other hand, we can consider the graph of each walk and split it by the cycles and non cycle parts of the graph as depicted in Figure-\ref{sub6} and Figure-\ref{sub7}. In this case, the two walks share two similar components and only differ by their starting paths. It is just a matter of perspective, what we are interested in and what we see as more similar. In the splitting in Figure-\ref{sub6} and Figure-\ref{sub7}, some information from original walks are lost. Both walks are running through exactly the same vertices, but their length varies. This means that some vertices are visited more often in one walk than the other one, which is visible in Figure-\ref{sub2}, as we are walking part of the cycle for a second time. This information is coded in the split in Figure-\ref{sub3}, so that the last paths differ. In this paper we consider that walk $w_1$ and walk $w_2$ only share one cycle, which they enter at different vertices.

We use algorithm \ref{alg1} to split sequences into their components. In the next section, we show how we use a native graph database to save the sequences by their components and perform sequence analysis by querying the graph database.

\begin{figure}[t]
	\centering
	 \begin{subfigure}[b]{0.35\textwidth}
		\includegraphics[width=\textwidth]{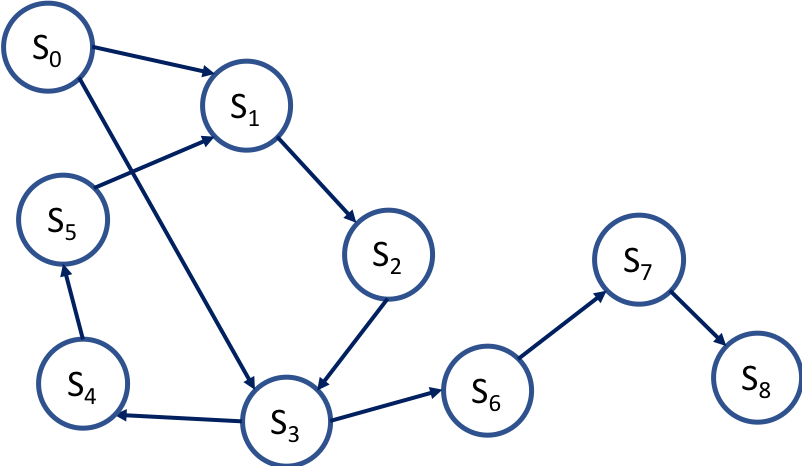}
		\caption{An Example of a graph $G_2$}
		\label{sub1}
	\end{subfigure}
	\linebreak
	\linebreak
	\begin{subfigure}[b]{0.35\textwidth}
		\includegraphics[width=\textwidth]{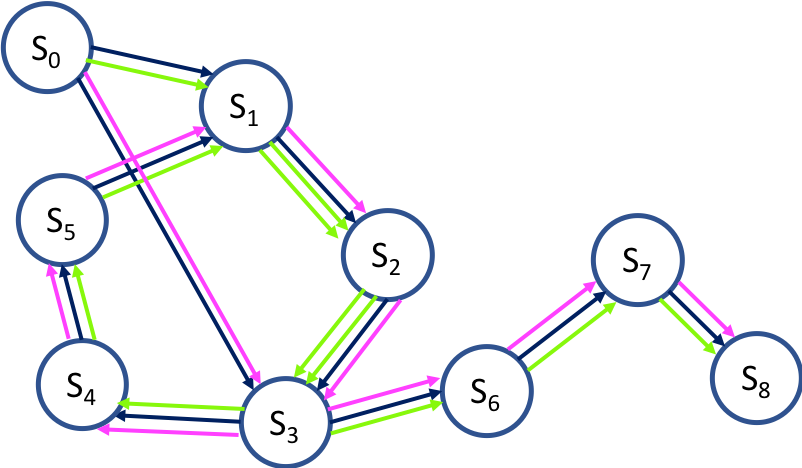}
		\caption{Two possible walks on $G_2$ (green and magenta).}
		\label{sub2}
	\end{subfigure}
	~~~~~~~
	\begin{subfigure}[b]{0.44\textwidth}
		\includegraphics[width=\textwidth]{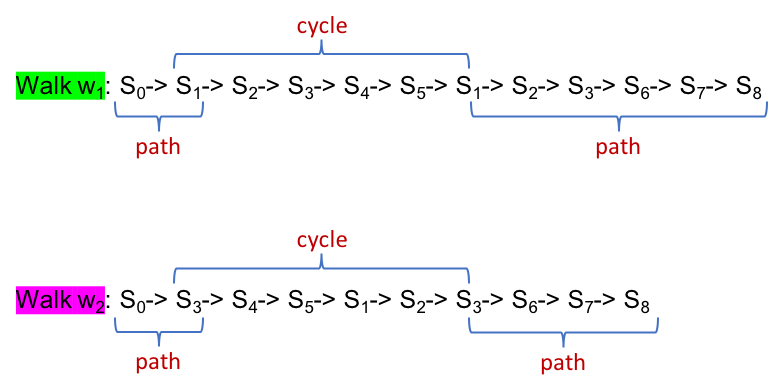}
		\caption{Two possible walks depicted in (b) and their components.}
		\label{sub3}
	\end{subfigure}
	\linebreak
	\linebreak
		\begin{subfigure}[b]{0.45\textwidth}
		\includegraphics[width=\textwidth]{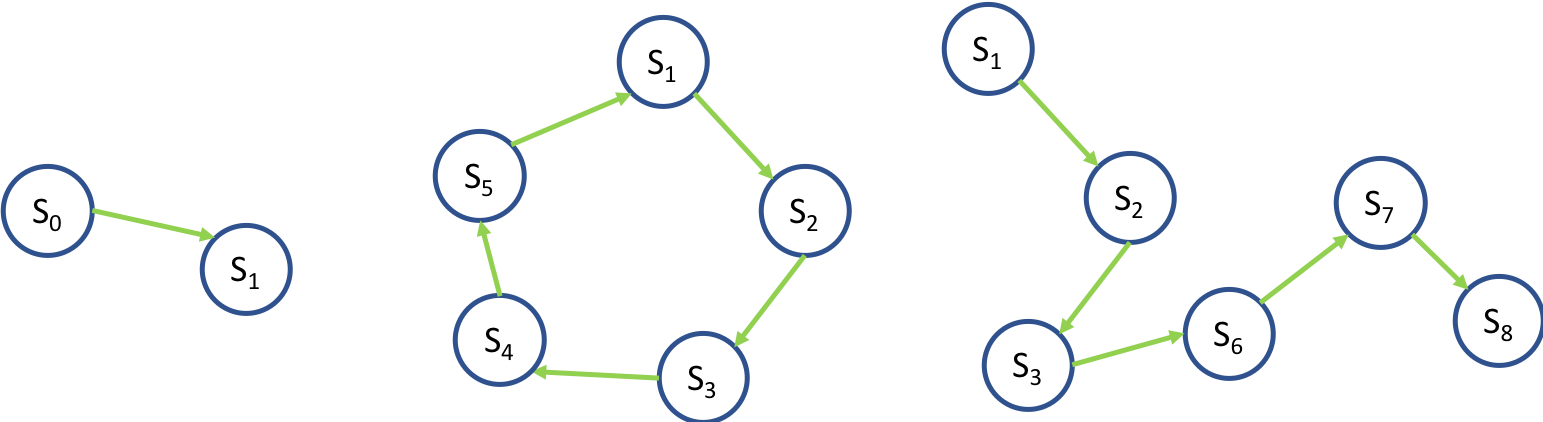}
		\caption{Components for walk $w_1$ from algorithm~ \ref{alg1}.}
		\label{sub4}
	\end{subfigure}
	~~~~~~~
	\begin{subfigure}[b]{0.45\textwidth}
		\includegraphics[width=\textwidth]{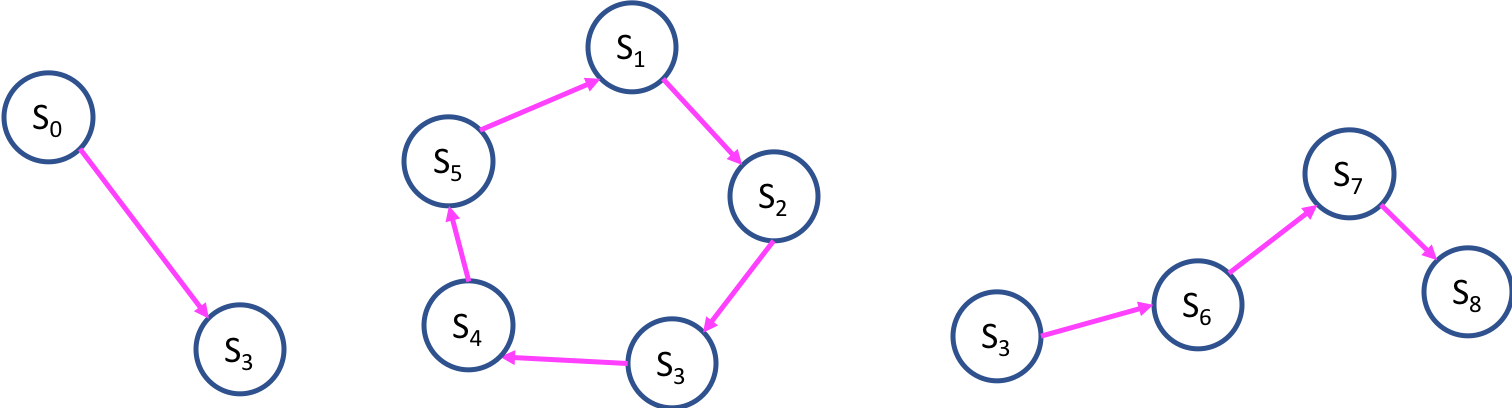}
		\caption{Components for walk $w_2$ from algorithm~ \ref{alg1}.}
		\label{sub5}
	\end{subfigure}
	\linebreak
	\linebreak
	\begin{subfigure}[b]{0.45\textwidth}
		\includegraphics[width=\textwidth]{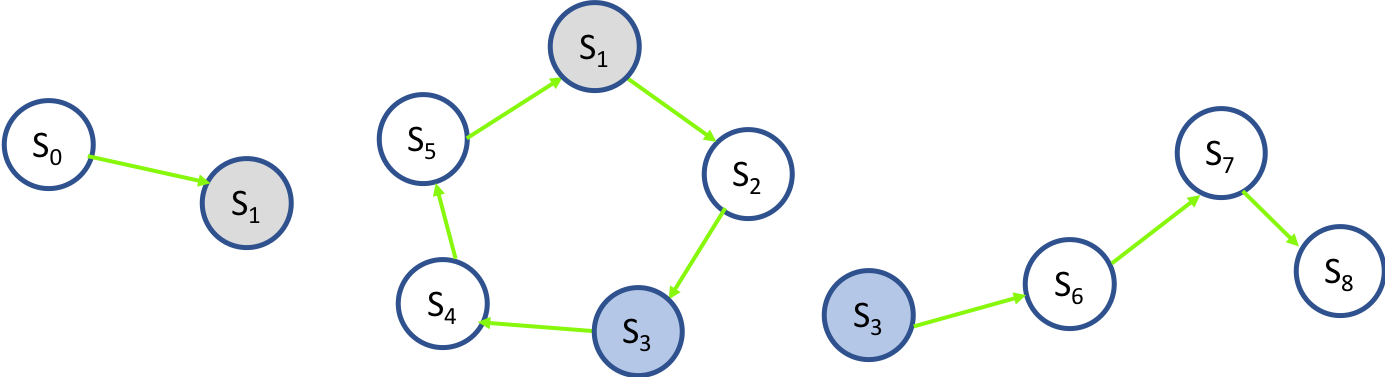}
		\caption{Graph components of walk $w_1$}
		\label{sub6}
	\end{subfigure}
	~~~~~~~
	\begin{subfigure}[b]{0.45\textwidth}
		\includegraphics[width=\textwidth]{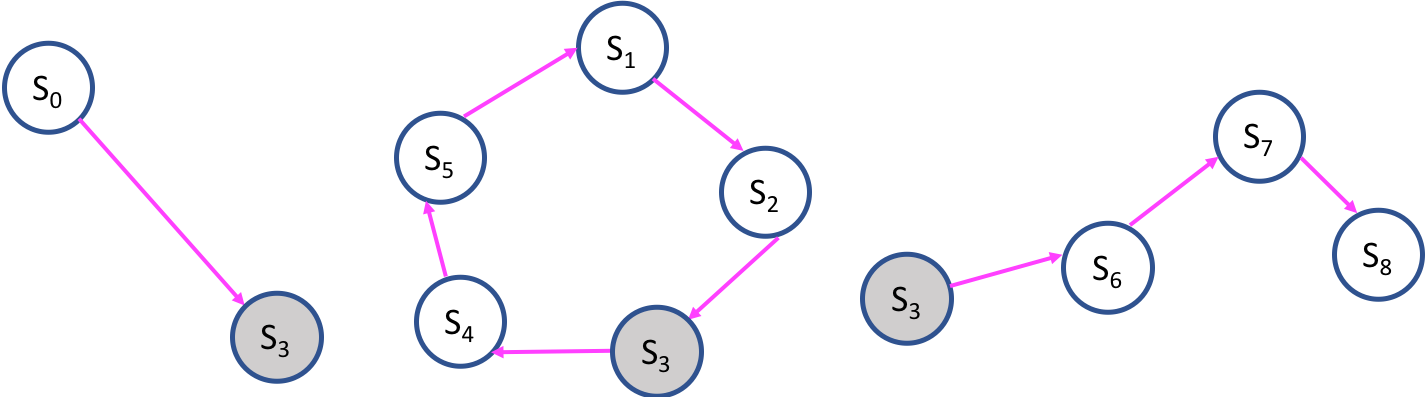}
		\caption{Graph components of walk $w_2$}
		\label{sub7}
	\end{subfigure}

	\caption{Graph components versus walk components from algorithm \ref{alg1}.}
	\label{fig2}
\end{figure}

\section{Modeling sequences in graph database}
In native graph databases, graph models play a central role. The graph model represents how we look into the data and which questions and queries we are interested in. In other words, our problem domain is reflected in our graph model~\cite{robinson2013graph}. In this chapter, we discuss several modeling scenarios for our sequence components and compare them with each other. we start with the problem description and discuss other aspects like performance and flexibility in the following sections.

\subsection{Problem description}\label{problem}
In our clickstream data, we are interested in the ways customers use our application in the car. The key analytical questions are, e.g., common problems like long common paths, or cycles, to fulfill a task, or clustering the customer according to their usage pattern of the application.

In the literature, there are methods and topics which address similar or related problems, such as funnel analysis. In contrast to our method, in funnel analysis, we first select a specific path and then analyze the behavior of the customers on that path~\cite{mah2006method}.
 
The majority of the works on customer behavior and clickstream analysis is either focused on sequence analysis by sequence vectorization~\cite{montgomery2004modeling,lu2005mining,ting2005ubb,wang2013you}, or comparison of some key performance indicators~\cite{moe2002bayesian,heer2002separating}. In most of these studies the patterns inside a sequence are coded and compared with each other but the meaning and shape of a single sub-pattern in the sequence is lost or not considered. In the method we introduced in the section \ref{comps}, we split sequences into their underlying components such as simple path and cycle. These components can be compared, classified and grouped in the process of sequence analysis. For this, we save the sequences in form of their components in a graph database. Our models have a hierarchical structure and at each level of hierarchy we can save the corresponding meta data for that level of hierarchy. Figure \ref{fig7} shows a modeled hierarchy for the clickstream data from vehicle's onboard applications. At the top of hierarchy we have customers. Each customer performs several drives. During each drive different applications are used. For each application usage we have the corresponding clickstream sequence which can be split into its underlying simple paths and cycles by algorithm \ref{alg1}. In the example from Figure \ref{fig7}, application 1 in drives 1 and 2 shares components 2 and 3. Same application in drive 3 has totally different components than in drive 1 and 2.

At each level of hierarchy we can have different attributes. A customer can have gender, age and address. A drive has a start and an end time, and start and destination positions. Application sequence has the attributes of application category, e.g. \emph{Navigation}, \emph{Entertainment}, etc. Each application can have its own specific attributes. \emph{Navigation}, for example, can have the binary status if a routing calculation during the drive has started or not.
\begin{figure}[h]
	\centering
	\includegraphics[width=0.7\textwidth]{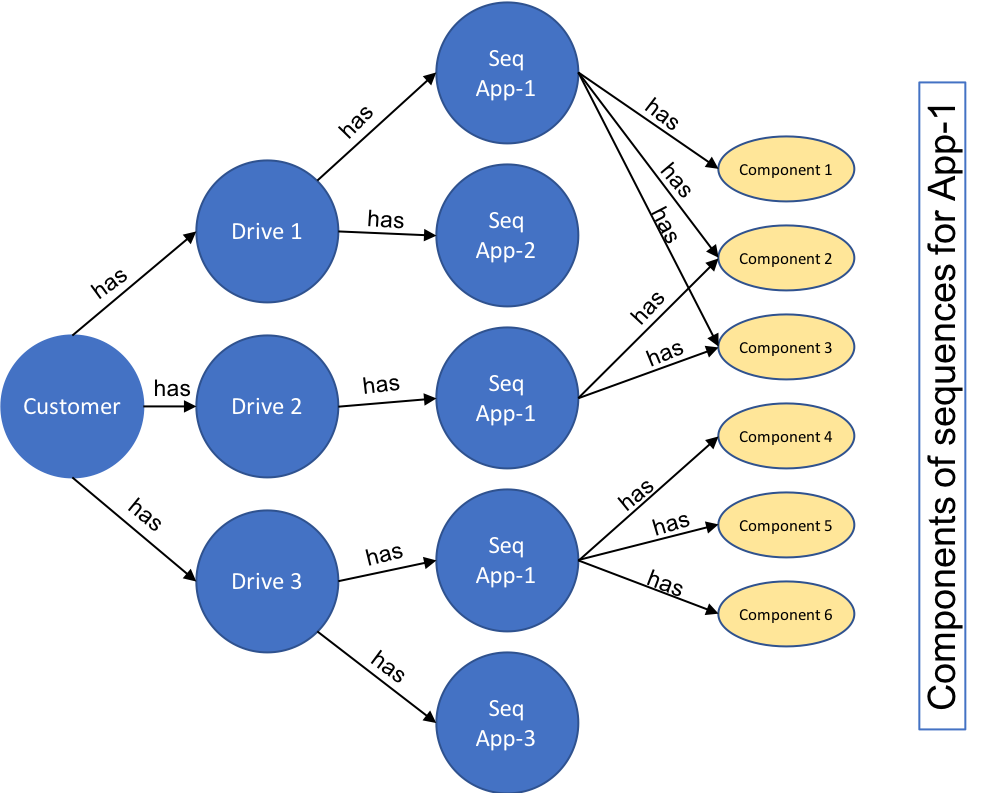}
	\caption{A modeled hierarchy for clickstream data.}
	\label{fig7}
\end{figure}

Each sequence itself consists of a finite number of components such as simple paths and cycles from the walk on the FSA of the corresponding application. In the next section we describe several scenarios how we can model these components in a graph database.

\subsection{Modeling the sequence components in a graph database}\label{models}

As we discussed in chapter 3, a simple path $p$ can be clearly defined by the ordered list of its vertices, where the order defines visiting time of the vertex. A simple path has clear start and end point. From  Theorem~\ref{theo2} follows that a cycle can be defined clearly by a set of its edges, while the order of edges in a set is not important. A cycle does not have a defined start and end point. It can be entered at any vertex. We start our models according to these mathematical definitions.

Suppose we have a set of sequences $Q$ to analyze, with $|Q|$ as the number of sequences in $Q$. The Algorithm \ref{alg1} results in $|C|$ distinct cycles and $|P|$ distinct simple paths after computing on set $Q$. Lets be $c_e$ and $c_v$ the number of all edges and vertices of the cycles in $C$ (with duplicates). Analogously, we have $p_e$ and $p_v$ as the number of all edges and vertices of the paths in $P$ respectively (with duplicates). The sequences in $Q$ are walks on a FSA graph. The FSA hast $|V|$ vertices and $|E|$ edges. In the following we use these numbers to compare different modeling variants.

\subsubsection{Model variant 1}
In the first variant we start modeling directly on FSA graph. We model distinct states of an FSA as defined vertices in graph database. Each path and cycle is marked with a new different edge between the vertices of FSA. Several edges can exist between two vertices. A simple example with three components in Figure \ref{fig8}, one cycle and two paths, demonstrate our model in variant 1. As we can see between vertex $S_2$ and $S_3$ we have three edges with different types for each component. In this model we have to create $|V|$ vertices and $p_e + c_e$ edges in the graph database. 
\begin{figure}[h]
	\centering
	\includegraphics[width=0.7\textwidth]{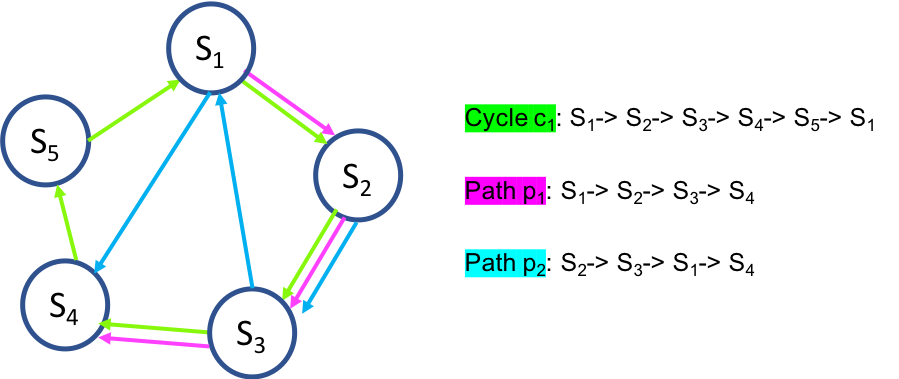}
	\caption{Cycles and paths in model variant 1.}
	\label{fig8}
\end{figure}
\subsubsection{Model variant 2} 
In the second variant, we save each component as separate vertices and edges as depicted in Figure \ref{fig10}. As a result, we have to create $p_v + c_v$ vertices and $p_e + c_e$ edges in the graph database.
\begin{figure}[h]
	\centering
	\includegraphics[width=0.7\textwidth]{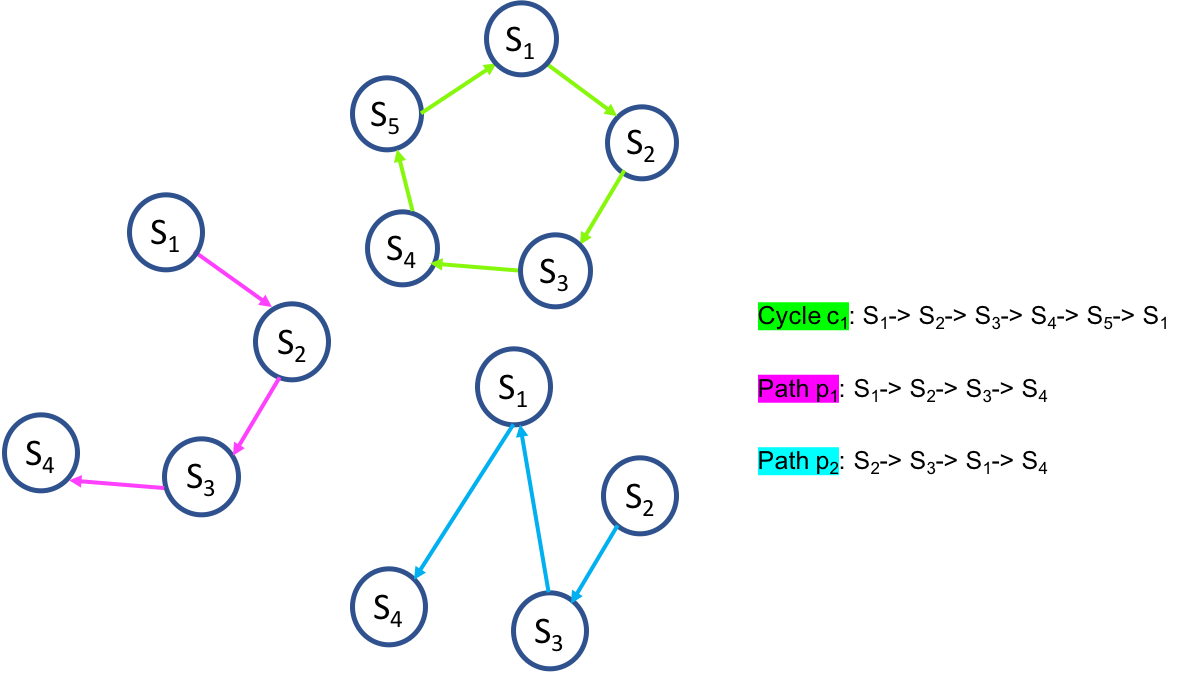}
	\caption{Cycles and paths in model variant 2.}
	\label{fig10}
\end{figure}
\subsubsection{Model variant 3} 
In the third model variant, instead of the vertices of the FSA, we save the distinct edges of the FSA as nodes (transit-nodes) in a graph database. Each path and cycle is represented as a node and is related to its corresponding transit-nodes. That means, components can share transit-nodes.
In Figure \ref{fig9}, the same example from Figure \ref{fig8} is modeled as variant 3. For an FSA with $|V|$ vertices as directed graph, the maximum number of edges $|E|$ is limited to 2-permutation of $|V|$ and therefore can be maximally $|V|*(|V|-1)$. The number of edges stays the same as in version 1, $p_e + c_e$. We have also to add extra vertices for representation of each path and cycle, which means $|C|+|P|$ extra nodes in database.
\begin{figure}[h]
	\centering
	\includegraphics[width=0.5\textwidth]{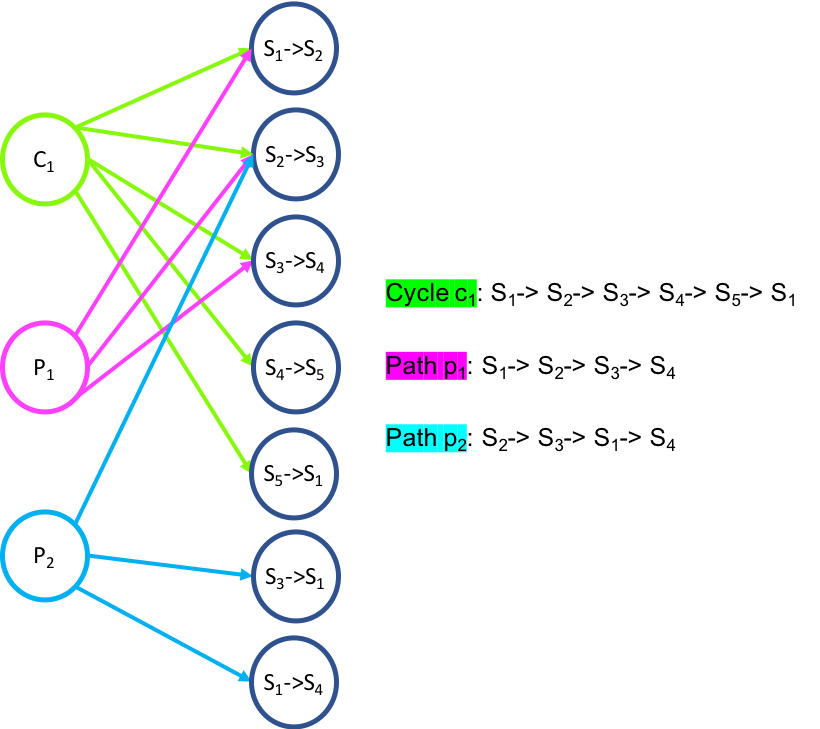}
	\caption{Cycles and paths in model variant 3.}
	\label{fig9}
\end{figure}

\subsection{Implementation of models in a graph database}
To prototype our models, we used the Neo4j community edition~\cite{neo4j}. In general, the concept is independent from the graph database platform but still variation in the implementation of graph databases can influence the performance on different models. To clarify the used vocabulary in the discussion below we go through a short introduction of some conventions in Neo4j. In Neo4j, there are nodes and relationships, which can be considered as equivalent to vertices and edges in a graph.
Each node or relationship has a type which acts like a classification, that means we can create several nodes and relationships of the same type. Each node and relationship can also have several attributes of different data types  such as name (\textit{String}), creation time (\textit{DateTime}), visited (\textit{Boolean}) etc. We use \emph{Cypher graph database query language}, widely used in Neo4j community~\cite{robinson2013graph}, for our expressions in this context. The below Cypher expression creates two nodes and a relationship between them. The nodes are of type \textit{State} and the relationship is of type \textit{Cycle}. $a$ and $r$ are variables. Type \textit{State} has an attribute \textit{Name}, which for the first node is \textit{Address Book} and for the second node is \textit{Search Field}. The relationship with the type \textit{Cycle} has an attribute \textit{ID} with value $101$.

\begin{verbatim}
CREATE (a:State {Name:'Address Book'})-[r:Cycle {ID:101}]->
(:State {Name:'Search Field'}) 
\end{verbatim}

\subsubsection{Injection of cycles and paths}
Suppose we have a set of $|C|$ distinct cycles and ${|P|}$ distinct paths to add to the graph database. 
\subparagraph{Injection in model variant 1:}
In model version 1, we can create the $|V|$ states of the FSA graph at once. Then, for each cycle or path, it is enough to \textit{select} its vertices in database and \textit{create} the edges. Due to the relative small number of states, this task is performed in constant time. But Neo4j is not designed for multiple-edges modeling. For creating multiple relationships between two nodes, there are several possibilities:
\begin{itemize}
	\item multiple relationships with same type and different e.g. name attributes,
	Neo4j does not support creation of such multiple relationships between two nodes from the available drivers. It is possible to create such relationships directly from the Neo4j desktop.
	\item multiple relationships with different types.
	\item single relationship with a list of attributes. Having paths as attribute entries in a list makes the queries on searching and selecting those paths inefficient because the graph database is optimized for queries directly performed on nodes and relationships.
\end{itemize}	
	In both cases, with multiple relationships,for both single type or multiple types, we have to check if the new path or cycle already exists. As we know from Theorem~\ref{theo1}, a shorter path can be part of a longer path. Thus, such paths will cause problems by selecting them in model variant 1, because all the longer ones will be selected along with it. At the end, we have to perform more complicated queries to find out which of them are longer ones. Another disadvantage of this model is the assignment to its higher level. Because the distinction between components is modeled as edges, it is not possible to assign the parent level nodes (drives / sequences) by a relationship. We have to put this information at the attribute level, which again withdraws the benefits of a graph structure.

\subparagraph{Injection in model variant 2:}
In the model version 2, for every distinct component we create nodes and relationships between those nodes as a new component. The path or cycle can be distinguished by, e.g., its ID or Name at the attribute level on nodes and relationships. By every injection we have to check if that path or cycle already exists in the database. It can be performed by the following expressions easily.
\begin{itemize}
	\item  For a path with n nodes:
\end{itemize}
\begin{verbatim}
MATCH p =(s:State {name: 'State Name 1', title: 'start'})-[r]->
(State {name: 'State Name i'})--> ... // all the nodes is between
(e:State {name: 'State Name n', title: 'end'})
RETURN r.name
\end{verbatim}
The paths is clearly identified through its start, end and the nodes in between.
\begin{itemize}
	\item  For a cycle with n nodes:
\end{itemize}
\begin{verbatim}
MATCH p =(:State {name: 'State Name 1')-[r]->
(State {name: 'State Name i'})--> ... // all the nodes is between
(State (:State {name: 'State Name 1'))
RETURN r.name
\end{verbatim}
The cycles need no start or end attribute, the only difference is that the start and end node of the match has to be the same. In this model we are saving the state nodes duplicated compared to version 1. On the other hand, we can assign the components to the parent nodes with an explicit relationship. Each drive can enter the cycle component at different nodes. Furthermore, the relationship between drives and their components can have timestamps to capture the order of the components in the walk. If the component dose not exist in the database, it can be added by replacing the MATCH with a CREATE.

\subparagraph{Injection in model variant 3:}
In the third variant we create the edges of the underlying FSA graph as nodes and assign them to the cycles and paths. The MATCH expressions to check if the components already exists, can have the following forms:
\begin{itemize}
	\item  For a path with n nodes.
\end{itemize}
\begin{verbatim}
MATCH (c:Paths)-[d:P]->(:T)
WHERE ALL( x IN {start_end}  WHERE (c)-[:P]-> 
		//start_end is a list of tuples for each edge 
		//with names of its start and end states.
(:T {start: x[0], end: x[1]}) )
WITH c, count(d) as pathLength
WHERE pathLength = length({start_end})
RETURN c.name
\end{verbatim}
\begin{itemize}
	\item  For a cycle with n nodes.
\end{itemize}
\begin{verbatim}
MATCH (c:Circles)-[d:C]->(:T)
WHERE ALL( x IN {start_end}  WHERE (c)-[:C]->
		//start_end is a list of tuples for each edge 
		//with names of its start and end states.
(:T {start: x[0], end: x[1]}) )
RETURN c.name
\end{verbatim}
In this model, the edges are shared, hence even if the component itself has to be created, it must be guaranteed that the node for edges are added once. For this, we use MERGE instead of CREATE. MERGE is equivalent to MATCH before CREATE. Accordingly, a CREATE operation is faster than MERGE.

\subparagraph{Injection of large amount of data:} If we have to deal with a large amount of data at once, it is better to avoid the single inserts due to lock latencies. For this, in Neo4j we can benefit from table-wise insertions, in which the node and edge information are provided row-wise in a table. The table can be uploaded in one step and will be internally processed row by row. In the case of variant 1, the problems already discussed in the case of single injection also persist in the table-wise insertion. In the model variant 2, we create for each edge of the components, a row in a table with a start, an end and a component id. In row wise insertion, first the start and the end nodes will be merged. Finally, the relationship between the start and the end of the edge will be created. 
In variant 3, we can have the edges per component as rows of a table and merge them to the database, the same way as in the single insertion.

\subsubsection{Query the graph}
Actually the main purpose of the data transformation into graph database, is to open the possibilities for data queries which in relational databases are either inefficient or even impossible. We concentrate ourselves on the example of several queries which play an important role in the analysis of customer behavior. At the same time, these queries do not perform well in traditional relational databases, particularly in the case of large amounts of data. Of course, we can always add explicit indexing on relational databases to mimic the benefits of non-relational databases. The scope of discussion here is the benefits of non-relational databases such as graph databases, and not the way we want to realize such designs.

\subparagraph{Example 1:}
Given a sequence of consequent states such $S_5 -> S_3 -> S_7$ as a simple path, we are interested in the customers who traversed exactly this simple path during the application usage. In a relational database, the data as described in section~\ref{data}, has to be sorted per session and timestamp. Even if, the data is stored sorted, we still have to search the whole table (e.g. several T bytes) for the start state of the simple path and check the consequent states afterwards. In a graph database such simple paths can be searched at once with all nodes and edges. Experiments from our prototype show that the number of graph components converts. That means the search space is much smaller than the original table in the relational database.

\subparagraph{Example 2:}
We are interested to find out how many different paths the customers traverse between two states of interest $S_i$ and $S_j$. The query itself will be similar to the example before, but the variation depends strongly on whether we consider the loops and the multiple traversal of the loops or not. Identifying loops is not straightforward. However, the component based storage of the clickstream data makes it possible to identify such structures.

\subparagraph{Example 3:}
We are interested to find out in which loops the customers are stuck most of the time. Firstly, finding the loops itself is not straight forward. Secondly, we have to be able to count the loops as the same loop even if, they appear with different starting points in the sequence. Here is an example; $S_5 -> S_3 -> S_7 -> S_5$ and $S_7 -> S_5 -> S_3 -> S_7$ are the same cycle. For these kinds of analysis we have to extract data and solve our questions algorithmically. The components wise storage of the sequences allows us to perform such analysis as queries. We can also combine it with similar reasoning as funnel analysis, such as considering loops appear in a path of interest.

\subparagraph{Example 4:}
One of the main focus of click stream analysis and customer behavior analytics in literature is sequence pattern mining. They usually use methods of vectorization such as k-grams to map the sequences to vectors and define metrics on them~\cite{wang2017clickstream}. Finally they use known vector based pattern mining methods such as clustering. In this work, we have introduced component wise analysis of the sequences. The main hypothesis is that the sequences consist of cycles and simple paths which they share. In other words, we can compare and group the sequences according to their common components. In our prototype, we can cluster the sequences into clusters of the exact same components. The clustering itself can be performed by a single query and can be restricted easily for further analysis.

The first two examples are query based analysis which show the benefits of a graph database storage while the third and forth example highlight the benefit of component-wise analysis of the sequences.

\section{Result}
For our prototype we used a subset of the data described in section~\ref{data}. The subset consists about 200~k drives within two weeks and is about $5$ GB in size. For the storage in the graph database, the following steps were performed:
\begin{itemize}
	\item Extracting the sequences per drive.
	\item Deriving the components of the sequences.
	\item Insertion of the sequences and their components into the graph.
	\item Insertion of customer-drive nodes into the graph and assignment to the sequences.	
\end{itemize}
The created result in the graph database is less than 500 MB large. We have achieved roughly a space reduction of 10x. We implemented all the three model variants described in section~\ref{models}. All three models result in a similar storage amount of $<$ 500 MB. In what follows, we compare the model variant 2 and 3 by performing the queries from examples 1 to 4. Table~\ref{tab:problem} gives an overview of the subset used for our experiments according to the numbers introduced in section~\ref{problem}. Table~\ref{tab:models} summarizes the number of elements in each model. It also shows if the element is modeled as a relationship (edge) or a node.
\begin{table}[h]
	\centering	
	\footnotesize 
	\begin{tabular}{|c|c|}
		\hline
		Quantities& number		\\
		\hline
		$|Q|$ number of sequences $\equiv$ drives	&224265\\
		\hline
		$|C|$ number of distinct cycles	&3767\\
		\hline
		$|P|$ number of simple paths	&7438\\
		\hline
		$c_e = c_v$ total number of edges or vertices in cycles	&16077\\
		\hline
		$p_e = p_v -1$ total number of edges or vertices in paths	&25736\\
		\hline
		$V$ number of vertices in FSA $\equiv$ states	&124\\
		\hline
		$E$ number of edges in FSA $\equiv$ transits	&1691\\
		\hline
	\end{tabular}
	\caption{Overview of the statistics for the prototyped subset.}
	\label{tab:problem}	
\end{table}

\begin{table}[h]
\centering	
\footnotesize 
\begin{tabular}{|c|c|c|c|c|c|c|c|c|}
\hline
\multicolumn{5}{|c}{}&
\multicolumn{2}{|c|}{Seq. Components}&
\multicolumn{2}{c|}{}\\
\hline
					&Vehicle		&Drove	&Drive			&Has			&Circles	&Paths	&State	&Transit\\
\hline
NodesV2	&85975		&-				&224265			&-				&-			&-			&49375&-\\
\hline
EdgesV2	&-					&224265		&-					&1570774	&16077	&25736	&-&-\\
\hline
NodesV3	&85975		&-				&224265			&-				&3767	&7438		&-&1691\\
\hline
EdgesV3	&-				&	224265		&-					&1570774	&16077	&25736	&-&-\\
\hline
\end{tabular}
\caption{Model Variant 2 compared with Model Variant 3}
\label{tab:models}	
\end{table}

In the first query from example 1, we search the database for all drives which have passed through the simple path $S_{1710} -> S_{552} -> S_{574}$, where the numbers 1710, 552 and 574 are the states' IDs. The search query in model variant 2 looks like this:
\begin{verbatim}
MATCH p=(v:Vehicle)-[:Drove]->(:Drive)-[:Has]->(:State)-[*0..]->
(:State {stateId:1710})-->(:State {stateId:552})-->
(:State {stateId:574})-[*0..]->(:State)
RETURN p
\end{verbatim}
The same search in model variant 3 looks like this:
\begin{verbatim}
Match (v:Vehicle)-[:Drove]->(:Drive)-[:Has]->
(c)-->(:T {name: "1710_552"}),
(c)-->(:T {name: "552_574"})
with (v:Vehicle)-[:Drove]->(:Drive)-[:Has]->(c)-->(:T) as p
return distinct p
\end{verbatim}
In both models it takes few milliseconds to query the data. Figure~\ref{Q1} shows the results of query 1. The query finds 4 paths and 6 drives.

\begin{figure}[t]
	\centering
	\begin{subfigure}[b]{0.8\textwidth}
		\includegraphics[width=\textwidth]{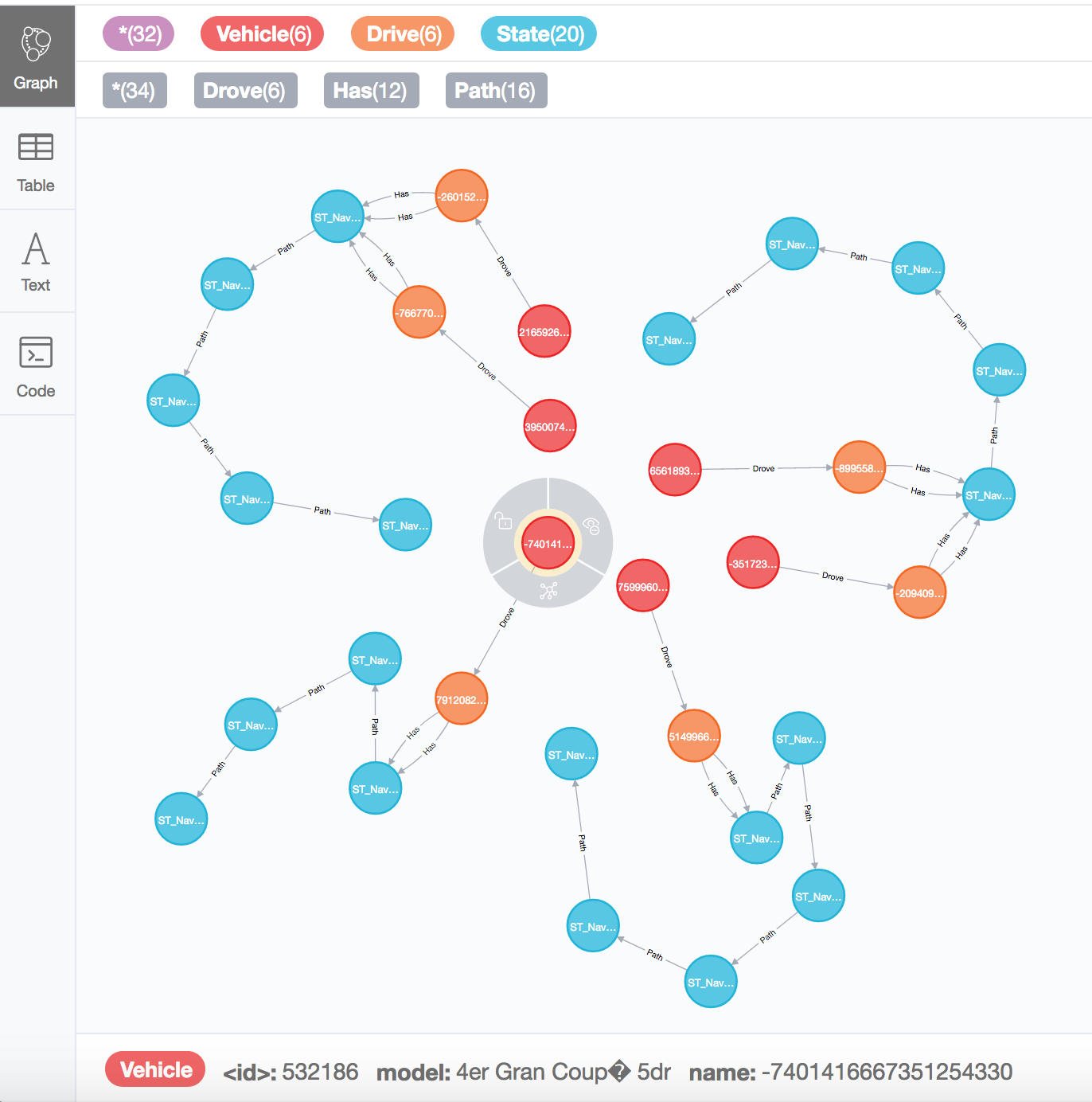}
		\caption{Result of Query 1 in model variant 2.}
		\label{Q1_V2}
	\end{subfigure}
	\linebreak
	\linebreak
	\begin{subfigure}[b]{0.8\textwidth}
		\includegraphics[width=\textwidth]{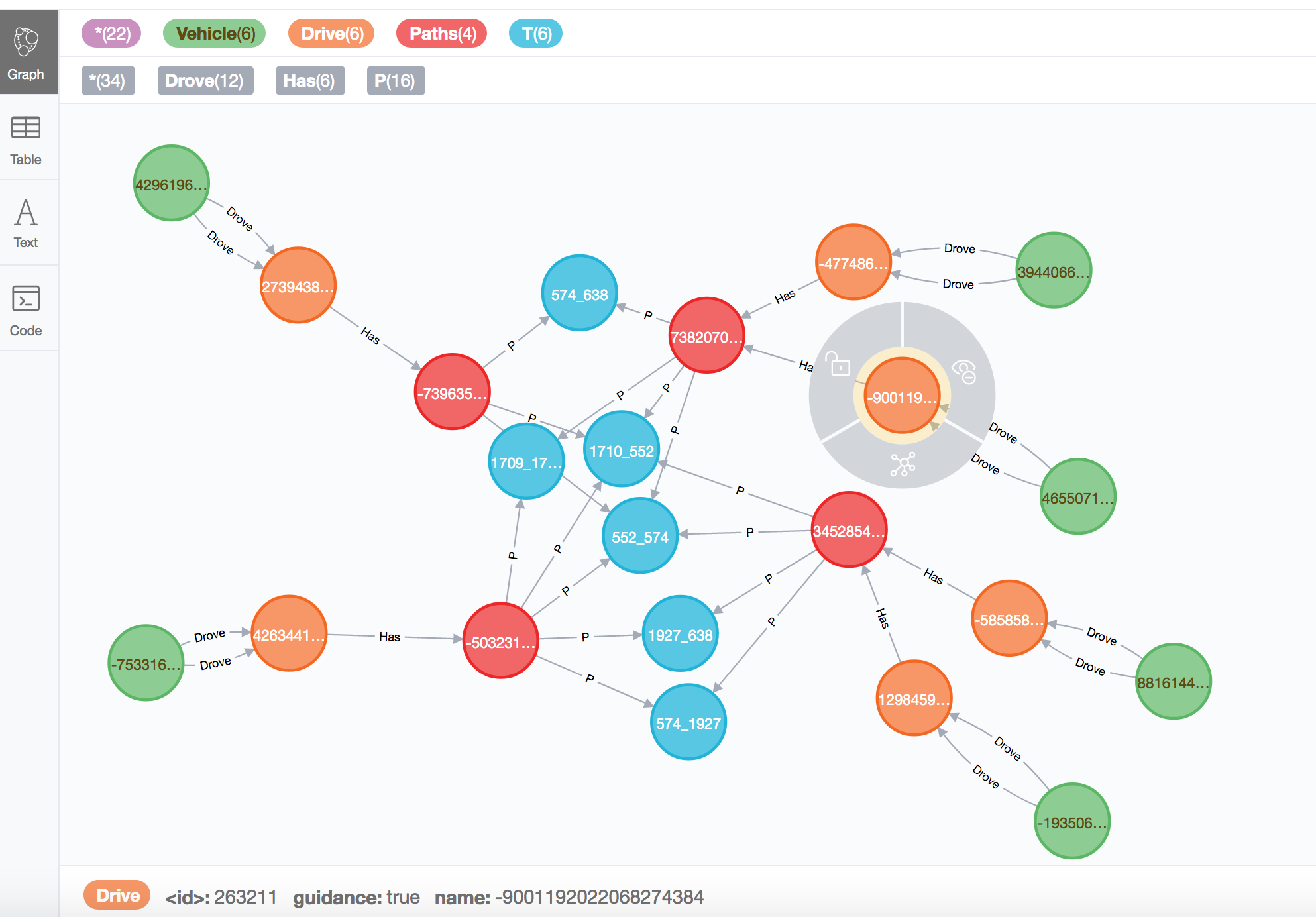}
		\caption{Result of Query 1 in model variant 3.}
		\label{Q1_V3}
	\end{subfigure}

	\caption{Result of query 1 in different models. Four paths is found. Six drives and vehicles has these paths as their components.}
	\label{Q1}
\end{figure}

In the second query from example 2, we search the database for all paths going through state $S_{1710}$ and state $S_{574}$. In Figure~\ref{Q2}, we compare the result of query 2 in both model variants. In model variant 2, the query finds 8 simple paths and a cycle. In model variant 3, we have 9 paths and a cycle. The reason for the difference is, that in model variant 3, it is difficult to consider the order of the transit nodes. For this, it finds a path in which the state $S_{574}$ is visited before the state $S_{1710}$.
Below is the possible query for variant 2:
\begin{verbatim}
MATCH (:State {stateId:1710})-[c*]->(:State {stateId:574})
with (:Vehicle)-->(:Drive)-->(:State)
-[*{compHash: head(c).compHash}]->() as p
RETURN  p
\end{verbatim}
The same search in model variant 3 looks like this:
\begin{verbatim}
Match (v:Vehicle)-[:Drove]->(:Drive)-[:Has]->
(c)-->(:T {start: "ST_Nav_DestInput_FTS_Result"})
,(c)-->(:T {end: "ST_Nav_LastDestinations"})
with (v:Vehicle)-[:Drove]->(:Drive)-[:Has]->(c)-->(:T) as p
return distinct p
\end{verbatim}

\begin{figure}[t]
	\centering
	\begin{subfigure}[b]{0.70\textwidth}
		\includegraphics[width=\textwidth]{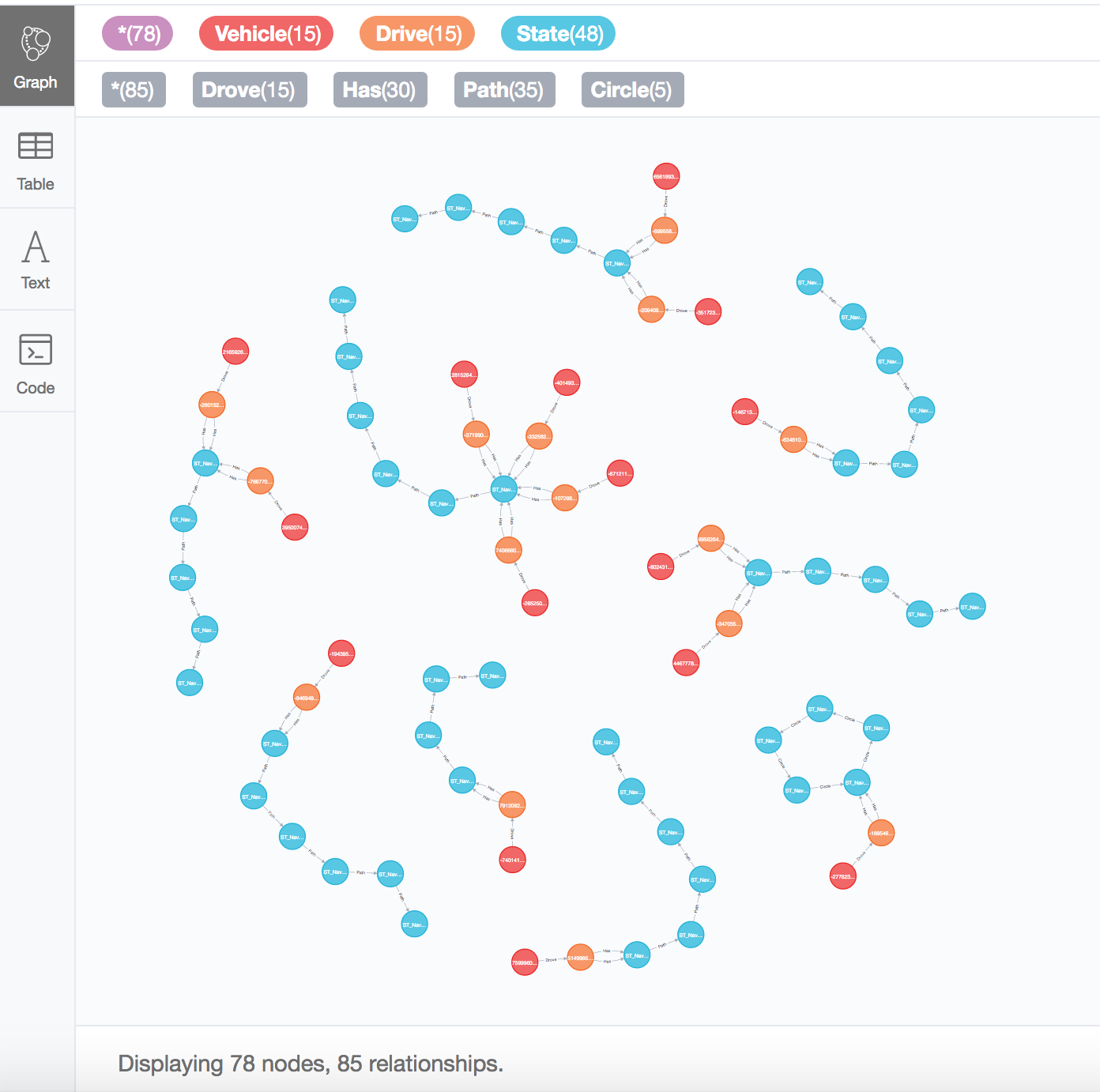}
		\caption{result of Query 2 on model variant 2.}
		\label{Q2_V2}
	\end{subfigure}
	\linebreak
	\linebreak
	\begin{subfigure}[b]{0.70\textwidth}
		\includegraphics[width=\textwidth]{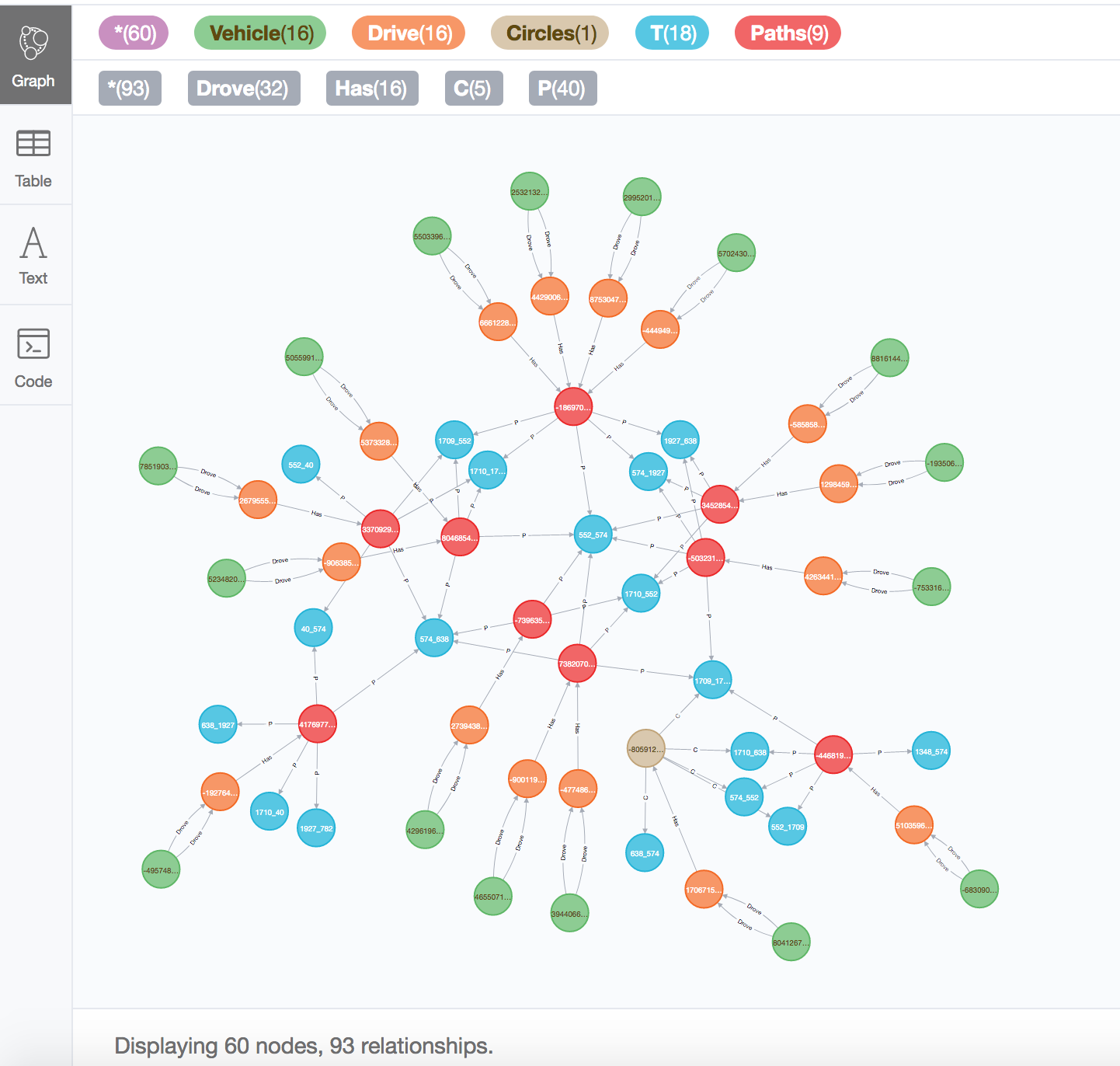}
		\caption{result of Query 2 on model variant 3.}
		\label{Q2_V3}
	\end{subfigure}
	
	\caption{Model variant 3 finds one more path, because it does not taking into account the order of visits.}
	\label{Q2}
\end{figure}

In the third query, we are looking for non trivial cycles, which are visited more than one time by several users. Figure~\ref{Q3} shows one of the cycles with six nodes from the result list with 16 users. It shows that these users have difficulty to use a specific functionality, as they are repeating these particular steps several times.
\begin{figure}[t]
	\centering
	\begin{subfigure}[b]{0.85\textwidth}
		\includegraphics[width=\textwidth]{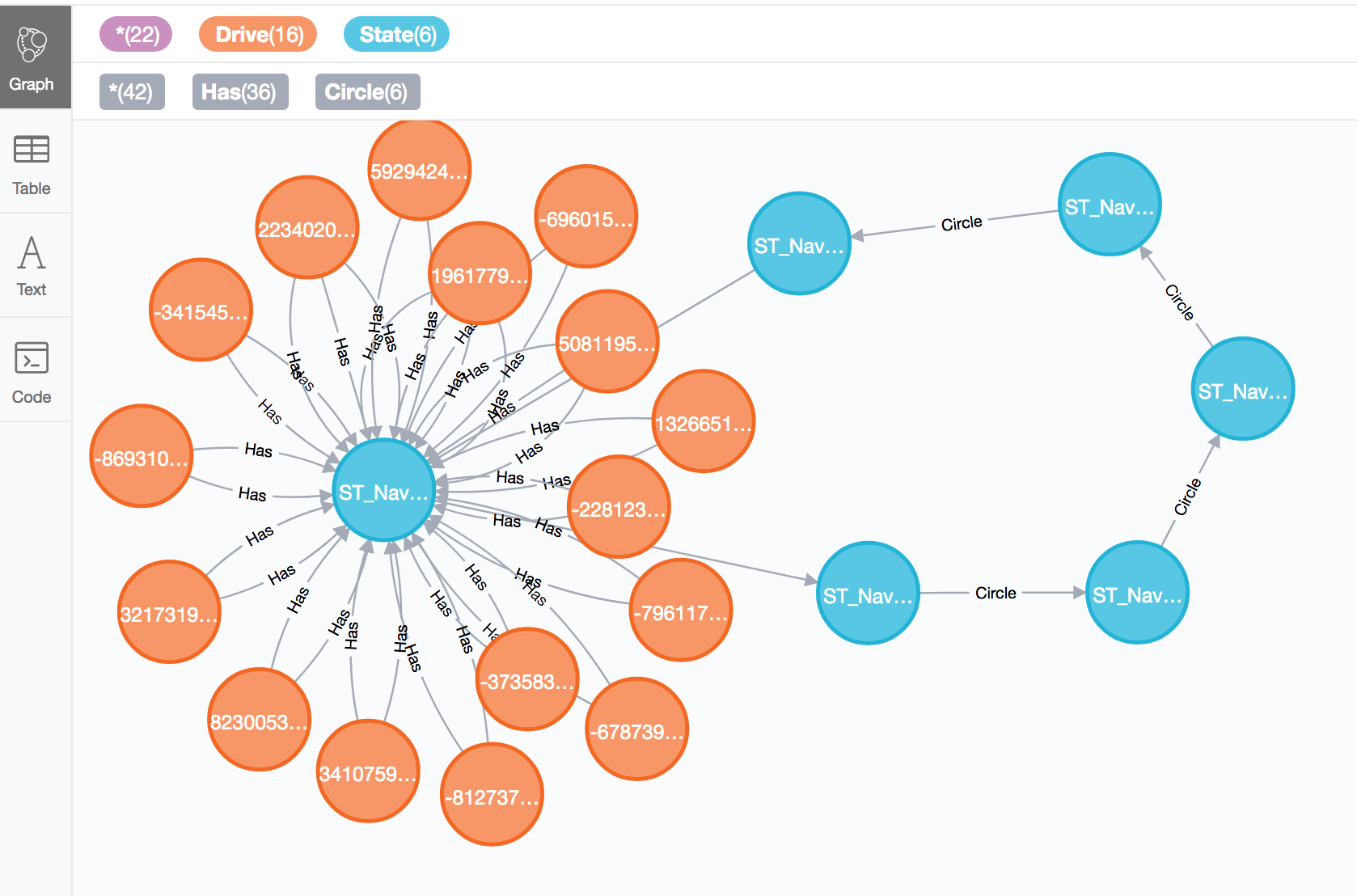}
		\caption{Result of Query 3 in model variant 2.}
		\label{Q3_V2}
	\end{subfigure}
	\linebreak
	\linebreak
	\begin{subfigure}[b]{0.85\textwidth}
		\includegraphics[width=\textwidth]{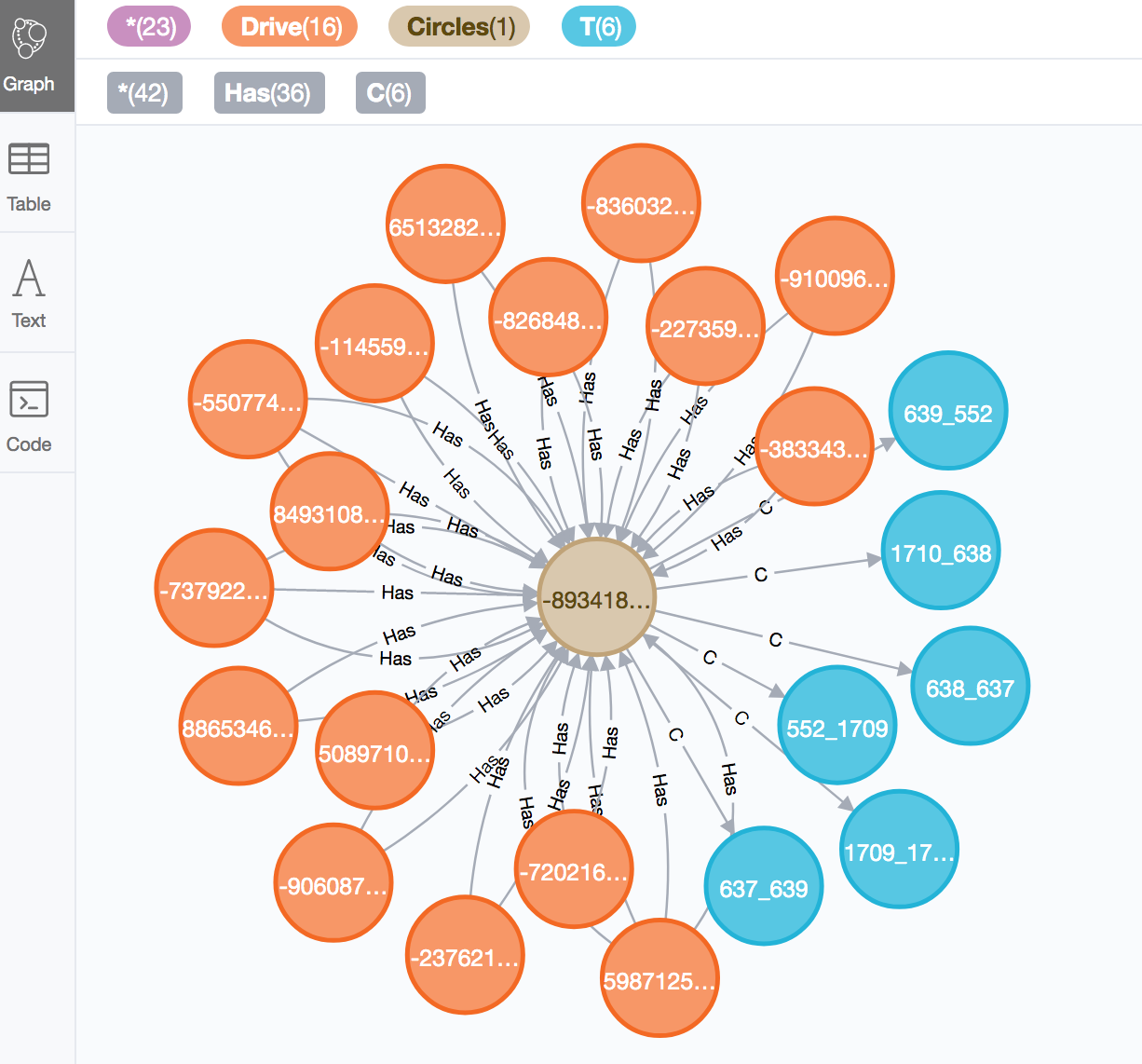}
		\caption{Result of Query 3 on model variant 3.}
		\label{Q3_V3}
	\end{subfigure}
	
	\caption{Result of query 3 on different models. A cycle with 6 nodes and 16 users who go through the cycle several times.}
	\label{Q3}
\end{figure}
Here is the Query 3 in Model variant 2:
\begin{verbatim}
Match p =(d:Drive)-[h:Has]->(s:State)-[:Circle*]->(:State) 
with id(d) as di , count(h) as NrVisit,  s.compHash as CircleName,
length(max(p))-1 as CircleLen
where NrVisit/CircleLen > 1
with CircleName, size(collect(di)) as NrDirves, 
avg(NrVisit/CircleLen) as NrVisits, CircleLen
where NrDirves > 10
Return CircleName, NrDirves, NrVisits, CircleLen
order by CircleLen desc
limit 20
\end{verbatim}
And here is the Query 3 in Model variant 3:
\begin{verbatim}
MATCH (n:Drive)-[h:Has]->(c:Circles)
with n, count(h) as ch, c
where ch > 1
with c, size(collect(n.name)) as NrDirves, 
avg(ch) as NrVisits, 
size((:Circles {name: c.name})-->(:T)) as CircleLen
where NrDirves > 10 
return c.name as CircleName, NrDirves, NrVisits, CircleLen
order by CircleLen desc
limit 20
\end{verbatim}


In Query 4, we cluster the drives by their common components. It clusters $166556$ from $224265$ drives into $7063$ clusters. There is a big cluster with $67151$ elements which is a single component (a cycle). This cycle is a trivial cycle in which the application starts per default. In this clustering we are considering the exact common components. On the other hand, if we use only cycles for clustering we will have $180999$ drives clustered into $6048$ clusters. To cluster the remaining singleton clusters, we can use other scoring possibilities such as Jaccard distance, to compare the common neighborhood between drives. 
Here is the Query 4 in Model variant 2:
\begin{verbatim}
MATCH (n:Drive)-[:Has]->(:State)-[c]->(:State) 
with n, c.compHash as compHash
order by compHash
with n, collect(distinct compHash) as cluster 
with collect(n.name) as Drives, cluster 
where size(Drives) > 1
return cluster, size(Drives) as clusterSize 
order by size(Drives) desc
\end{verbatim}
The same query in Model variant 3 looks like this:
\begin{verbatim}
MATCH (n:Drive)-[:Has]->(c:Circles) 
with n, c.name as compHash
order by compHash
with n, collect(distinct compHash) as Cluster 
with collect(n.name) as Drives, Cluster 
where size(Drives) > 1
return Cluster, size(Drives) as clusterSize 
order by size(Drives) desc
\end{verbatim}

One of the major challenges however remains the design of the proper injection pipeline for the components into the graph database. This comes from the fact that we have to avoid the duplicated components. That means at each insertion we have to check whether the component already exists in the database. We propose a hash indexing of the components in a look-up table to address this issue.

\section{Discussion}
In this work we have discussed the idea of remodeling the data representation and storage system, which can provide new possibilities for data analysis. In the case of sequence analysis, instead of traditional vectorization methods, we suggest a graph component-wise analysis. The concept behind it is derived from the fact that sequence itself is a traversal of a finite state automata. Based on this assumption, we introduce a new way of reviewing a sequence and consideration of loops. The main hypothesis is that the variation of components in a real use case converts. This assumption is very important because in a fully connected graph the number of possible simple paths and cycles increases exponentially with the number of nodes and edges. The number of simple paths in a graph with $n$ nodes for example can be approximated by the size of possible subsets of a $n$-element set, $2^n$. Thus, we have to examine the hypothesis on a larger amount of data. 

In general, due to the huge amount of data produced by the vehicles, the scalability is of great interest to us. In analysis of the model variants from section~\ref{models}, we evaluated a specific implementation of Neo4j and highlighted its limitations. Further investigation of other graph databases and their comparative study is therefore necessary.

Last but not least, we introduce a platform for ad hoc analysis of customer click data. Most of the relational databases are equipped with dashboards and graphical visualization interfaces which make it easier for the end-user to use those systems. Graph databases have their own query languages which is not familiar to most end-users.Therefore, an evaluation of user friendly interfaces and visualization tools for graph databases is essential.

\medskip

\bibliographystyle{unsrt}
\bibliography{bib}

\begin{thebibliography}{10}

\bibitem{srivastava2000web}
Jaideep Srivastava, Robert Cooley, Mukund Deshpande, and Pang-Ning Tan.
\newblock Web usage mining: Discovery and applications of usage patterns from
  web data.
\newblock {\em Acm Sigkdd Explorations Newsletter}, 1(2):12--23, 2000.

\bibitem{banerjee2001clickstream}
Arindam Banerjee and Joydeep Ghosh.
\newblock Clickstream clustering using weighted longest common subsequences.
\newblock In {\em Proceedings of the web mining workshop at the 1st SIAM
  conference on data mining}, volume 143, page 144, 2001.

\bibitem{moe2002bayesian}
Wendy~W Moe, Hugh Chipman, Edward~I George, and Robert~E McCulloch.
\newblock A bayesian treed model of online purchasing behavior using in-store
  navigational clickstream.
\newblock {\em Revising for 2nd review at Journal of Marketing Research}, 2002.

\bibitem{heer2002separating}
Jeffrey Heer and Ed~H Chi.
\newblock Separating the swarm: categorization methods for user sessions on the
  web.
\newblock In {\em Proceedings of the SIGCHI Conference on Human factors in
  Computing Systems}, pages 243--250. ACM, 2002.

\bibitem{montgomery2004modeling}
Alan~L Montgomery, Shibo Li, Kannan Srinivasan, and John~C Liechty.
\newblock Modeling online browsing and path analysis using clickstream data.
\newblock {\em Marketing science}, 23(4):579--595, 2004.

\bibitem{baumann2018changing}
Annika Baumann, Johannes Haupt, Fabian Gebert, and Stefan Lessmann.
\newblock Changing perspectives: Using graph metrics to predict purchase
  probabilities.
\newblock {\em Expert Systems with Applications}, 94:137--148, 2018.

\bibitem{wang2017clickstream}
Gang Wang, Xinyi Zhang, Shiliang Tang, Christo Wilson, Haitao Zheng, and Ben~Y
  Zhao.
\newblock Clickstream user behavior models.
\newblock {\em ACM Transactions on the Web (TWEB)}, 11(4):21, 2017.

\bibitem{bhatti2000integrating}
Nina Bhatti, Anna Bouch, and Allan Kuchinsky.
\newblock Integrating user-perceived quality into web server design.
\newblock {\em Computer Networks}, 33(1-6):1--16, 2000.

\bibitem{ting2005ubb}
I-Hsien Ting, Chris Kimble, and Daniel Kudenko.
\newblock Ubb mining: finding unexpected browsing behaviour in clickstream data
  to improve a web site's design.
\newblock In {\em The 2005 IEEE/WIC/ACM International Conference on Web
  Intelligence (WI'05)}, pages 179--185. IEEE, 2005.

\bibitem{wang2013you}
Gang Wang, Tristan Konolige, Christo Wilson, Xiao Wang, Haitao Zheng, and Ben~Y
  Zhao.
\newblock You are how you click: Clickstream analysis for sybil detection.
\newblock In {\em Presented as part of the 22nd $\{$USENIX$\}$ Security
  Symposium ($\{$USENIX$\}$ Security 13)}, pages 241--256, 2013.

\bibitem{geng2014improving}
Ruili Geng and Jeff Tian.
\newblock Improving web navigation usability by comparing actual and
  anticipated usage.
\newblock {\em IEEE transactions on human-machine systems}, 45(1):84--94, 2014.

\bibitem{lu2005mining}
Lin Lu, Margaret Dunham, and Yu~Meng.
\newblock Mining significant usage patterns from clickstream data.
\newblock In {\em International Workshop on Knowledge Discovery on the Web},
  pages 1--17. Springer, 2005.

\bibitem{wang2016unsupervised}
Gang Wang, Xinyi Zhang, Shiliang Tang, Haitao Zheng, and Ben~Y Zhao.
\newblock Unsupervised clickstream clustering for user behavior analysis.
\newblock In {\em Proceedings of the 2016 CHI Conference on Human Factors in
  Computing Systems}, pages 225--236. ACM, 2016.

\bibitem{mah2006method}
Teresa Mah, Hendricus~DJ Hoek, and Ying Li.
\newblock Method and system for clickpath funnel analysis, March~28 2006.
\newblock US Patent 7,020,643.

\bibitem{hay2004mining}
Birgit Hay, Geert Wets, and Koen Vanhoof.
\newblock Mining navigation patterns using a sequence alignment method.
\newblock {\em Knowledge and information systems}, 6(2):150--163, 2004.

\bibitem{pitman2010insights}
Arthur Pitman and Markus Zanker.
\newblock Insights from applying sequential pattern mining to e-commerce click
  stream data.
\newblock In {\em 2010 IEEE International Conference on Data Mining Workshops},
  pages 967--975. IEEE, 2010.

\bibitem{corel2016network}
Eduardo Corel, Philippe Lopez, Rapha{\"e}l M{\'e}heust, and Eric Bapteste.
\newblock Network-thinking: graphs to analyze microbial complexity and
  evolution.
\newblock {\em Trends in Microbiology}, 24(3):224--237, 2016.

\bibitem{paten2017genome}
Benedict Paten, Adam~M Novak, Jordan~M Eizenga, and Erik Garrison.
\newblock Genome graphs and the evolution of genome inference.
\newblock {\em Genome research}, 27(5):665--676, 2017.

\bibitem{kavya2019sequence}
Vaddadi Naga~Sai Kavya, Kshitij Tayal, Rajgopal Srinivasan, and Naveen
  Sivadasan.
\newblock Sequence alignment on directed graphs.
\newblock {\em Journal of Computational Biology}, 26(1):53--67, 2019.

\bibitem{zhang2018network}
Chi Zhang, Fengyu Cong, Tuomo Kujala, Wenya Liu, Jia Liu, Tiina Parviainen, and
  Tapani Ristaniemi.
\newblock Network entropy for the sequence analysis of functional connectivity
  graphs of the brain.
\newblock {\em Entropy}, 20(5):311, 2018.

\bibitem{hingston2002using}
Philip Hingston.
\newblock {\em Using finite state automata for sequence mining}, volume~24.
\newblock Australian Computer Society, Inc., 2002.

\bibitem{bender2010lists}
Edward~A Bender and S~Gill Williamson.
\newblock {\em Lists, Decisions and Graphs}, page 164.
\newblock S. Gill Williamson, 2010.

\bibitem{robinson2013graph}
Ian Robinson, Jim Webber, and Emil Eifrem.
\newblock {\em Graph databases}, page~25.
\newblock " O'Reilly Media, Inc.", 2013.

\bibitem{neo4j}
Neo4j~Community Edition.
\newblock https://neo4j.com/licensing/.

\end{thebibliography}

\end{document}